\pgfplotsset{compat=1.3}
\newcolumntype{L}[1]{>{\raggedright\let\newline\\\arraybackslash\hspace{0pt}}m{#1}}
\newcolumntype{C}[1]{>{\centering\let\newline\\\arraybackslash\hspace{0pt}}m{#1}}
\newcolumntype{R}[1]{>{\raggedleft\let\newline\\\arraybackslash\hspace{0pt}}m{#1}}
\newcommand{\reals}{\mathbb{R}}
\newcommand{\E}{\mathbb{E}}
\newcommand{\tr}{\text{Tr}}
\newcommand{\diag}{\text{diag}}
\newcommand{\dm}{\textup{\text{dim}}}
\newcommand{\sn}{\textup{\text{span}}}
\newcommand{\cl}{\textup{\text{col}}}
\newenvironment{customlegend}[1][]{%
    \begingroup
    \csname pgfplots@init@cleared@structures\endcsname
    \pgfplotsset{#1}%
}{%
    \csname pgfplots@createlegend\endcsname
    \endgroup
}%
\def\addlegendimage{\csname pgfplots@addlegendimage\endcsname}
\theoremstyle{plain}
\newtheorem{theorem}{Theorem}
\newtheorem{corollary}[theorem]{Corollary}
\theoremstyle{definition}
\theoremstyle{remark}
\title{Task-Aware Network Coding Over Butterfly Network}
\author{
    Jiangnan Cheng\textsuperscript{\rm 1}, Sandeep Chinchali\textsuperscript{\rm 2}, Ao Tang\textsuperscript{\rm 1}
}
\begin{document}

\maketitle

\begin{abstract}
Network coding allows distributed information sources such as sensors to efficiently compress and transmit data to distributed receivers across a bandwidth-limited network. Classical network coding is largely task-agnostic -- the coding schemes mainly aim to faithfully reconstruct data at the receivers, regardless of what ultimate task the received data is used for. In this paper, we analyze a new \textit{task-driven} network coding problem, where distributed receivers pass transmitted data through machine learning (ML) tasks, which provides an opportunity to improve efficiency by 
transmitting salient task-relevant data representations. 
Specifically, we formulate a \textit{task-aware} network coding problem over a butterfly network in real-coordinate space, where lossy analog compression through principal component analysis (PCA) can be applied. A lower bound for the total loss function for the formulated problem is given, and necessary and sufficient conditions for achieving this lower bound are also provided. We introduce ML algorithms to solve the problem in the general case, and our evaluation demonstrates the effectiveness of task-aware network coding.
\end{abstract}

\section{Introduction}\label{sec:intro}

Distributed sensors measure rich sensory data which potentially are consumed by multiple distributed data receivers. On the other hand, network bandwidths remain limited and expensive, especially for wireless networks. For example, low Earth orbit satellites collect high-resolution Earth imagery, whose size goes up to few terabytes per day and is sent to geographically distributed ground stations, while in the best case one ground station can only download 80 GB from one satellite in a single pass \citep{vasisht2021l2d2}. Therefore, one is motivated to make efficient use of the existing network bandwidths for \textit{distributed} data sources and receivers.

Network coding \citep{ahlswede2000network} is an important technology which aims at maximizing the network throughput for multi-source multicasting with limited network bandwidths. Classical network coding literatures \citep{li2003linear,koetter2003algebraic,dougherty2005insufficiency,jaggi2005polynomial,ho2006random,chen2008utility} consider a pure network information flow problem from the information-theoretic view, where the demands for all the data receivers, either homogeneous or heterogeneous, are specified and the objective is to satisfy each demand with a rate (i.e., mutual information between the demand and the received data) as high as possible. However, in reality each data receiver may apply the received data to a different task, such as inference, perception and control, where different lossy data representations, even with the same rate, can produce totally different task losses. Hence it is highly prominent to transmit \textit{salient} task-relevant data representations to distributed receivers that satisfy the network topology and bandwidth constraints, rather than representations with the highest rate.

\begin{figure*}[t]
\centering

\subfigure[]{
\includegraphics[scale=0.7]{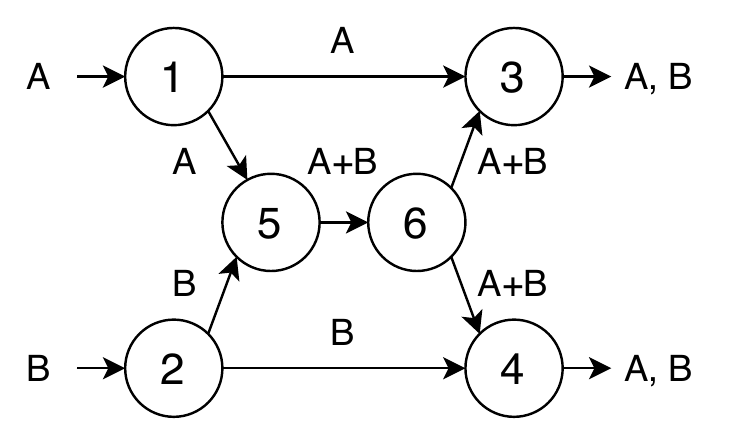}
\label{fig:network_coding}
}
\subfigure[]{
\includegraphics[scale=0.7]{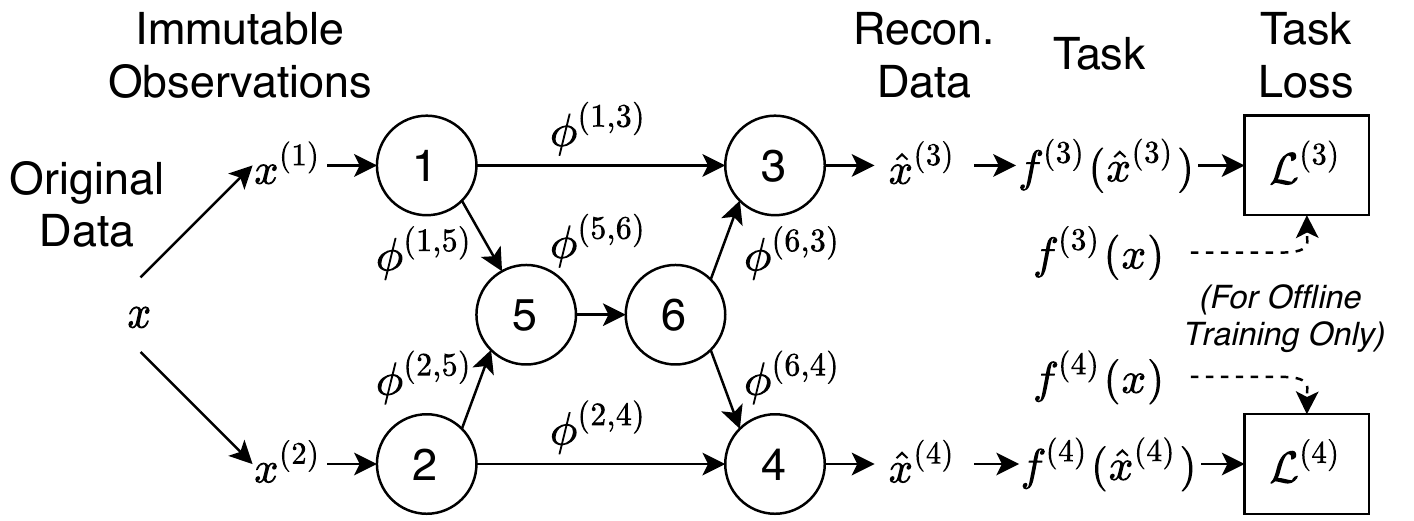}
\label{fig:architecture}
}

\caption{Network coding over butterfly network. Left (classical setting): Task-agnostic network coding in finite field. Node 3 can decode B through A+(A+B) where `+' represents exclusive or logic. 
Right (our setting): Task-aware network coding in real-coordinate space. Salient task-relevant data representations are transmitted to make efficient use of network bandwidths.}
\label{fig:two_network_codings}
\end{figure*}

Therefore, we formulate a concrete task-aware network coding problem in this paper --  \textit{task-aware} linear network coding over butterfly network, as shown in Fig. \ref{fig:architecture}. Butterfly network is a representative topology in many existing network coding literatures 
\citep{avestimehr2009approximate,parag2010queueing,soeda2011quantum}, 
and hence it suffices to demonstrate the benefit of making network coding task-aware. Moreover, the domain of our problem is multi-dimensional real-coordinate space $\reals^n$ rather than finite field $\text{GF}(\cdot)$ as in classical network coding literatures, which enables us to consider lossy analog compression (similar to \citet{wu2010renyi}) through principal component analysis (PCA) \citep{dunteman1989principal} rather than information-theoretic discrete compression.

\textbf{Related work}. Our work is broadly related to network coding and task-aware representation learning. First, beyond those classical network coding literatures, the two closest works to ours are \citet{liu2020neural} and \citet{whang2021neural}, where \textit{data-driven} approach is adopted in the general network coding and distributed source coding settings respectively, to determine a coding scheme that minimizes \textit{task-agnostic} reconstruction loss. In stark contrast, we aim at finding a linear network coding scheme that minimizes an overall \textit{task-aware} loss which incorporates heterogeneous task objectives of different receivers, and we show that in some cases such linear coding scheme can even be determined analytically. Second, our work is also related to network functional compression problem \citep{doshi2010functional,feizi2014network,shannon1956zero,slepian1973noiseless,ahlswede1975source,wyner1976rate,korner1979encode}, where a general function with distributed inputs over finite space is compressed. There's a similar task-aware loss function in our work, yet it corresponds to machine learning tasks over multi-dimensional real-coordinate space.
Lastly, there have been a variety of works \citep{blau2019rethinking,nakanoya2021co,dubois2021lossy,zhang2021universal,cheng2021data} focusing on task-aware data compression for inference, perception and control tasks under a single-source single-destination setting which is similar to Shannon's rate-distortion theory \citep{shannon1959coding}, while in contrast we consider task-aware data compression in a \textit{distributed} setting.

\textbf{Contributions}. In light of prior work, our contributions are three-fold as follows. First, we formulate a task-aware network coding problem over butterfly network in real-coordinate space where lossy analog compression through PCA can be applied. Second, we give a lower bound for the formulated problem, and provide necessary condition and sufficient conditions for achieving such lower bound. Third, we adopt standard gradient descent algorithm to solve the formulated problem in the general case, and validate the effectiveness of task-aware network coding in our evaluation.


\section{Preliminaries}\label{sec:preliminaries}
\subsection{Network Coding with a Classical Example}\label{subsec:preliminary_network_coding}

Network coding \cite{ahlswede2000network} is a technique to increase the network throughput for multi-source multicasting under limited network bandwidths. The key idea of network coding is to allow each node within the network to encode and decode data rather than simply routing it. 

A classical example over butterfly network in finite field $\text{GF}(2)$, as shown in Fig. \ref{fig:network_coding}, is widely used to illustrate the benefit of network coding. The butterfly network can be represented by a directed graph $\mathcal{G} = (\mathcal{V}, \mathcal{E})$. Here $\mathcal{V} = \{1, 2, \cdots, 6\}$ is the set of nodes, and $\mathcal{E} = \{(1, 3), (1, 5), (2, 4), (2, 5), (5, 6), (6, 3), (6, 4)\}$ is the set of edges, where $(i, j)$ represents an edge with source node $i$ and destination node $j$. Suppose each edge in $\mathcal{E}$ can only carry a single bit, and node 1 and 2 each have a single bit of information, denoted by A and B respectively, which are supposed to be multicast to both node 3 and 4. In this case, network coding, as illustrated in Fig. \ref{fig:network_coding}, makes such multicasting possible while routing cannot. The key idea is to encode A and B as A+B at node 5, where `+' here represents exclusive or logic. Node 3 is able to decode B through A+(A+B), and node 4 can decode A similarly.

\subsection{Task-aware PCA}\label{subsec:preliminary_pca}
PCA is a widely-used dimensionality-reduction technique, and has been used in \citet{nakanoya2021co,cheng2021data}, etc., for task-aware data compression under a single-source single-destination setting. 

Suppose we have an $n$-dimensional random vector $x \in \reals^n$, with mean $\E_x [x] = \bm{0}$ and positive definite covariance matrix $\Psi = \E_x[xx^\top]$ (i.e., $\text{rank}(\Psi) = n$). Consider the following task-aware data compression problem:
\begin{align}
\min_{D, E} \quad & \mathcal{L} = \E_x [\|f(x) - f(\hat{x}) \|_2^2]\\
\text{s.t.} \quad & \hat{x} = D E x, D \in \reals^{n \times Z}, E \in \reals^{Z \times n}
\end{align}
where $\hat{x}$ is the reconstructed vector through a bottlenecked channel which only transmits a low-dimensional vector in $\reals^Z$ such that $Z \leq n$, and $E \in \reals^{Z \times n}$ and $D \in \reals^{n \times Z}$ are the corresponding encoding and decoding matrices respectively. Loss function $\mathcal{L}$ is associated with a task function $f(\cdot) \in \reals^m$ and captures the mean-squared error between $f(x)$ and $f(\hat{x})$. In this paper we consider linear task function $f(x) = Kx$, where $K \in \reals^{m \times n}$ is called task matrix.

According to PCA, the optimal task loss $\mathcal{L}^*$ can be determined as follows. Suppose the Cholesky decomposition of $\Psi$ is $\Psi = LL^\top$ where $L \in \reals^{n \times n}$ is a lower triangular matrix with positive diagonal entries, and the eigen-values in descending order and the corresponding normalized eigen-vectors of Gram matrix $S = L^\top K^\top K L$ are $\mu_{1}, \mu_{2}, \cdots, \mu_{n}$ and $u_{1}, u_{2}, \cdots, u_{n}$ respectively. Then we have $\mathcal{L}^* = \sum_{i=Z+1}^n \mu_i$, and if the eigen-gap $\mu_{Z} - \mu_{Z+1} > 0$ (define $\mu_{n+1} = 0$), we must have $\cl(E^\top) = \sn(\{L^{-\top} u_{1}, L^{-\top} u_{2}, \cdots, L^{-\top} u_{Z}\})$ to achieve minimum task loss, where $\cl(\cdot)$ denotes the column space of a matrix 
and $\sn(\cdot)$ denotes the linear span of a set of vectors. See appendix for a detailed derivation.

\section{Problem Formulation}\label{sec:problem_formulation}

We now formulate a task-aware network coding problem over butterfly network, as shown in Fig. \ref{fig:architecture}. The key differences between our formulation and the classical example in Fig. \ref{fig:network_coding} are: 1) our formulation has a heterogeneous task objective for each receiver while the classical example does not; 2) the domain of code is multi-dimensional real-coordinate space in our formulation rather than finite space as in the classical example, and hence PCA can be applied.

\textbf{Data}. The original data is a random vector $x = [x_1, x_2, \cdots, x_n]^\top \in \reals^n$, where $x_i \in \reals$ is a random variable, $\forall i \in \{1, 2, \cdots, n\}$. Without loss of generality, we assume $\E_x [x] = \bm{0}$, or else we replace $x$ by $x - \E_x [x]$. We also let $\Psi= \E_x[xx^\top]$ be the covariance matrix of $x$.

\textbf{Data observations}. Node 1 and 2 have immutable partial observations of $x$, denoted by $x^{(1)} \in \reals^a$ and $x^{(2)} \in \reals^b$, respectively. Here observations $x^{(1)}$ and $x^{(2)}$ are composed of $a$ and $b$ different dimensions of $x$, respectively; and each $x_i$ exists in \textit{at least} one of the two observations. Therefore, we have $\max\{a, b\} \leq n \leq a+b$. Without loss of generality, we let $x^{(1)} = [x_1, x_2, \cdots, x_a]^\top$ and $x^{(2)} = [x_{n-b+1}, x_{n-b+2}, \cdots, x_n]^\top$.
That is, $x_{1:n-b}$ and $x_{a+1:n}$ are node 1's and node 2's exclusive observations respectively, and $x_{n-b+1:a}$ are their mutual observations.

\textbf{Data transmission}. We assume all the edges have the same capacity $Z$, which represents the number of dimensions in real-coordinate space here. And $\forall (i, j) \in \mathcal{E}$, we use $\phi^{(i,j)} \in \reals^Z$ to denote the random vector that transmits over the edge $(i, j)$.
Notice that for each edge $(i, j) \in \mathcal{E}' = \{(1,3), (1,5), (2,4), (2,5), (5,6)\}$, the overall number of input dimensions for node $i$ can be larger than $Z$, so we use linear mappings to transform the input signal to a low-dimensional signal in $\reals^Z$: 
\begin{align}
& \phi^{(1,3)} = E^{(1, 3)} x^{(1)}, \phi^{(1,5)} = E^{(1, 5)} x^{(1)},\label{eq:encoding_1}\\
& \phi^{(2,4)} = E^{(2, 4)} x^{(2)}, \phi^{(2,5)} = E^{(2, 5)} x^{(2)},\\
& \phi^{(5,6)} = E^{(5, 6)}
\begin{bmatrix}
\phi^{(1,5)}\\
\phi^{(2,5)}
\end{bmatrix},
\end{align}
where $E^{(1, 3)}, E^{(1, 5)} \in \reals^{Z \times a}$, $E^{(2, 4)}, E^{(2, 5)} \in \reals^{Z \times b}$, and $E^{(5, 6)} \in \reals^{Z \times 2Z}$ are encoding matrices. Node 6 simply multicasts the data received from node 5 to node 3 and 4, i.e., $\phi^{(6,3)} = \phi^{(6,4)} = \phi^{(5,6)}$.

\textbf{Data reconstructions}. Node 3 and 4 aim to reconstruct the original data $x$, through the aggregated inputs they received from their respective input edges. The corresponding decoder functions are:
\begin{align}
& \hat{x}^{(3)} = D^{(3)}
\begin{bmatrix}
\phi^{(1,3)}\\
\phi^{(6,3)}
\end{bmatrix},
\hat{x}^{(4)} = D^{(4)}
\begin{bmatrix}
\phi^{(2,4)}\\
\phi^{(6,4)}
\end{bmatrix},\label{eq:decoding_2}
\end{align}
where $D^{(3)}, D^{(4)} \in \reals^{n \times 2Z}$ are decoding matrices for node 3 and node 4 respectively, and $\hat{x}^{(3)}$ and $\hat{x}^{(4)}$ are the reconstructed data at node $3$ and node $4$ respectively.

\textbf{Task objectives}. Node $i$ ($\forall i \in \{3, 4\}$) uses the reconstructed data $\hat{x}^{(i)}$ as the input for a task with the following loss function:
\begin{align}
\mathcal{L}^{(i)} = \E_x [\|f^{(i)} (x) - f^{(i)}(\hat{x}^{(i)}) \|_2^2], \quad \forall i \in \{3, 4\}
\end{align}
where $f^{(i)}(x) = K^{(i)} x$ with task matrix $K^{(i)} \in \reals^{m_i \times n}$.
Our overall task loss $\mathcal{L}_\text{total}$ is the sum of $\mathcal{L}^{(3)}$ and $\mathcal{L}^{(4)}$:
\begin{align}
\mathcal{L}_\text{total} = \mathcal{L}^{(3)} + \mathcal{L}^{(4)}.
\end{align}

\textbf{Task-aware network coding problem}. The problem can be written as an optimization problem:
\begin{align}
\min_{E^{(i, j)}, D^{(i)}} \quad & \mathcal{L}_\text{total}, \quad \text{s.t.} \quad \text{Eq.}(\ref{eq:encoding_1})-(\ref{eq:decoding_2})\label{eq:prob}
\end{align}
where we find the optimal encoder and decoder parameters to minimize the overall task loss $\mathcal{L}_\text{total}$. And we denote the problem by $TaskAwareCoding(n, \Psi, a, b, Z, K^{(3)}, K^{(4)})$ for given parameters $n, \Psi, a, b, Z, K^{(3)}, K^{(4)}$.

\section{Analysis}\label{sec:analysis}
In this section, we give detailed analysis towards the task-aware network coding problem. We first provide a lower bound $\mathcal{L}_{\text{total,lb}}$ which may not be always achievable, and then discuss necessary condition and sufficient conditions for $\mathcal{L}^*_{\text{total}} = \mathcal{L}_{\text{total,lb}}$.


\subsection{Lower bound $\mathcal{L}_{\text{total,lb}}$} \label{subsec:analysis_lb}

We first show in the following Theorem \ref{thm:transformation} that making the assumption of $\text{rank}(\Psi) = n$
doesn't make the the task-aware network coding problem lose generality. 

\begin{theorem}\label{thm:transformation}
For any set of parameters $n, \Psi, a, b, Z, K^{(3)}, K^{(4)}$, we can transform $TaskAwareCoding(n, \Psi, a, b, Z, K^{(3)}, K^{(4)})$ to $TaskAwareCoding(\tilde{n}, \tilde{\Psi}, \tilde{a}, \tilde{b}, Z, \tilde{K}^{(3)}, \tilde{K}^{(4)})$ where $\tilde{n}, \tilde{\Psi}, \tilde{a}, \tilde{b}, Z, \tilde{K}^{(3)}, \tilde{K}^{(4)}$ is a set of parameters with $\text{rank}(\tilde{\Psi}) = \tilde{n}$, such that their optimal overall task losses are equal, and an optimal solution for one problem can be transformed to the optimal solution for another linearly.
\end{theorem}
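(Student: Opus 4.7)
My plan is to explicitly construct a change of basis that reduces $x$ to a full-rank random vector $\tilde{x}$ while preserving the butterfly network's observation block structure, and then establish a loss-preserving bijection between the two problems' feasible solutions in both directions.

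\textbf{Step 1 (construction of $\tilde{x}$).} First I would view the entries of $x^{(1)}$ and $x^{(2)}$ as elements of $L^2$ and let $V_i = \sn(x^{(i)})$ be their linear spans. Set $\tilde{a} = \dim V_1 = \text{rank}(\Psi^{(1)})$, $\tilde{b} = \dim V_2 = \text{rank}(\Psi^{(2)})$, $c = \dim(V_1 \cap V_2)$, and $\tilde{n} = \tilde{a} + \tilde{b} - c$. Then I would pick a basis of $V_1 \cap V_2$ of size $c$, extend it to bases of $V_1$ and $V_2$ separately, and stack the resulting $\tilde{n}$ linearly independent random variables as $\tilde{x} \in \reals^{\tilde{n}}$, placing the $V_1$-exclusive extension of size $\tilde{a} - c$ first, the shared basis of size $c$ in the middle, and the $V_2$-exclusive extension of size $\tilde{b} - c$ last. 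Then $\tilde{x}^{(1)} := \tilde{x}_{1:\tilde{a}}$ spans $V_1$, $\tilde{x}^{(2)} := \tilde{x}_{\tilde{n}-\tilde{b}+1:\tilde{n}}$ spans $V_2$, the shared block has the required size $\tilde{a} + \tilde{b} - \tilde{n} = c$, and $\tilde{\Psi} := \E[\tilde{x}\tilde{x}^\top]$ has rank $\tilde{n}$ by construction. Since $x^{(i)}$ and $\tilde{x}^{(i)}$ both span $V_i$, there exist matrices $T^{(i)}, S^{(i)}$ satisfying $\tilde{x}^{(i)} = T^{(i)} x^{(i)}$ and $x^{(i)} = S^{(i)} \tilde{x}^{(i)}$ a.s.; and since each entry of $x$ lies in $V_1 + V_2 = \sn(\tilde{x})$, there is also $M \in \reals^{n \times \tilde{n}}$ with $x = M\tilde{x}$ a.s. I would define $\tilde{K}^{(i)} = K^{(i)} M$, so that $\tilde{K}^{(i)} \tilde{x} = K^{(i)} x$ a.s.

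\textbf{Step 2 (bijection and loss equivalence).} To show $\mathcal{L}^*_{\text{total}} \leq \tilde{\mathcal{L}}^*_{\text{total}}$, I would take any feasible $(\tilde{E}, \tilde{D})$ for the new problem and lift it to the original via $E^{(1,j)} = \tilde{E}^{(1,j)} T^{(1)}$, $E^{(2,j)} = \tilde{E}^{(2,j)} T^{(2)}$, $E^{(5,6)} = \tilde{E}^{(5,6)}$, and $D^{(i)} = M \tilde{D}^{(i)}$. Routine substitution shows $\phi^{(i,j)} = \tilde{\phi}^{(i,j)}$ a.s. on every edge and $K^{(i)} D^{(i)} = \tilde{K}^{(i)} \tilde{D}^{(i)}$, so combined with $K^{(i)} x = \tilde{K}^{(i)} \tilde{x}$ the losses match. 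For the reverse inequality, from any $(E, D)$ I would set $\tilde{E}^{(1,j)} = E^{(1,j)} S^{(1)}$, $\tilde{E}^{(2,j)} = E^{(2,j)} S^{(2)}$, $\tilde{E}^{(5,6)} = E^{(5,6)}$, again obtaining $\tilde{\phi} = \phi$ a.s. For the decoder, with encoders fixed an optimal $D^{(i)}$ may be taken as the linear MMSE estimator of $x$, of the form $\Psi F^\top (F \Psi F^\top)^+$ where $F$ writes node $i$'s aggregated decoder input as $Fx$; this has $\cl(D^{(i)}) \subseteq \cl(\Psi) = \cl(M)$, so one may set $\tilde{D}^{(i)} = M^+ D^{(i)}$, obtaining $M \tilde{D}^{(i)} = D^{(i)}$ and hence $\tilde{K}^{(i)} \tilde{D}^{(i)} = K^{(i)} D^{(i)}$ with matching losses. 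Combining both directions gives $\mathcal{L}^*_{\text{total}} = \tilde{\mathcal{L}}^*_{\text{total}}$ together with the linear formulas for transforming optima.

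\textbf{Main obstacle.} The delicate part will be the reverse-direction argument that an optimal decoder can be chosen with $\cl(D^{(i)}) \subseteq \cl(M)$; this relies on recognizing it as a linear MMSE estimator and observing that the MMSE formula $\Psi F^\top (F\Psi F^\top)^+$ factors through $\cl(\Psi)$. The Step 1 basis construction is otherwise routine once one observes that $c = \tilde{a} + \tilde{b} - \tilde{n}$ exactly matches the required overlap dimension in the new problem.
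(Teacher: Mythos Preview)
Your proposal is correct and follows the same high-level strategy as the paper: perform a linear change of variables to a full-rank vector while preserving the butterfly observation block structure, then exhibit loss-preserving linear maps between feasible solutions in both directions.

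The executions differ in presentation. The paper first whitens via the eigendecomposition $\Psi = Q\Lambda Q^\top$, setting an intermediate $\Lambda^{-1/2}Q^\top x$ with identity covariance, and then applies a second change of basis $\Omega$ chosen so that the first $\tilde a$ and last $\tilde b$ coordinates span the column spaces coming from $x^{(1)}$ and $x^{(2)}$; the new task matrices are $\tilde K^{(i)} = K^{(i)} Q\Lambda^{1/2}\Omega^{-\top}$. You instead work intrinsically in $L^2$, building the basis directly from $V_1\cap V_2$ and its extensions to $V_1$ and $V_2$, which is a bit more elementary and avoids the detour through whitening. Both constructions yield the same $\tilde n,\tilde a,\tilde b$ (your $\tilde n=\dim(V_1+V_2)$ coincides with $\operatorname{rank}(\Psi)$ since every coordinate of $x$ lies in $V_1\cup V_2$).

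On one point your proposal is more careful than the paper. For the reverse transformation (original problem to reduced problem) the paper writes only ``Similarly, we can also transform a solution \ldots'', but the map $Q\Lambda^{1/2}\Omega^{-\top}$ is $n\times\tilde n$ with no left inverse, so an arbitrary optimal $D^{(i)}$ need not factor through it. Your MMSE observation---that without loss of optimality one may take $D^{(i)}=\Psi F^\top(F\Psi F^\top)^+$, whence $\cl(D^{(i)})\subseteq\cl(\Psi)=\cl(M)$ and $MM^+D^{(i)}=D^{(i)}$---fills this gap cleanly.
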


\begin{proof}
Assume the top-$\tilde{n}$ eigen-values of $\Psi$ are greater than zero, where $\tilde{n} \leq n$. We use $\lambda_1, \cdots, \lambda_{\tilde{n}}$ to denote these eigen-values and $q_1, \cdots, q_{\tilde{n}}$ to denote the corresponding normalized eigen-vectors. Moreover, we let $\Lambda = \diag(\lambda_1, \cdots, \lambda_{\tilde{n}}) \in \reals^{\tilde{n} \times \tilde{n}}$ and $Q = [q_1, \cdots, q_{\tilde{n}}] \in \reals^{n \times \tilde{n}}$.

Consider $\tilde{x} = \Lambda^{-\frac{1}{2}} Q^\top x \in \reals^{\tilde{n}}$. We have $\E [\tilde{x}\tilde{x}^\top] = \Lambda^{-\frac{1}{2}} Q^\top \Psi Q \Lambda^{-\frac{1}{2}} = I$. And we also have $x = Q \Lambda^{\frac{1}{2}} \tilde{x}$. 
For simplicity we let $\Theta = (Q \Lambda^{\frac{1}{2}})^\top$ and use $\theta_i$ to denote the $i$-th column vector of $\Theta$. Let $\Theta_1 = [\theta_1, \cdots, \theta_{a}] \in \reals^{\tilde{n} \times (n-a)}$ and $\Theta_2 = [\theta_{n-b+1}, \cdots, \theta_n] \in \reals^{\tilde{n} \times (n-b)}$, and let $\tilde{a} = \dm(\cl(\Theta_1))$ and $\tilde{b} = \dm(\cl(\Theta_2))$. Then we have $\dm(\cl(\Theta_1) \cap \cl(\Theta_2)) = \tilde{a} + \tilde{b} - \tilde{n}$. We can find $\tilde{n}$ vectors that form a basis of $\cl(\Theta)$, denoted by $\omega_{1}, \cdots, \omega_{\tilde{n}}$, such that  $\omega_1, \cdots, \omega_{\tilde{a}}$ and $\omega_{\tilde{n}-\tilde{b}+1}, \cdots, \omega_{\tilde{n}}$ form bases of $\cl(\Theta_1)$ and $\cl(\Theta_2)$ respectively, and $\omega_{\tilde{n}-\tilde{b}+1}, \cdots, \omega_{\tilde{a}}$ form a basis of $\cl(\Theta_1) \cap \cl(\Theta_2)$. 
Therefore, we let $\tilde{x}' = \Omega^\top \tilde{x}$ where $\Omega = [\omega_1, \cdots, \omega_{\tilde{n}}] \in \reals^{\tilde{n} \times \tilde{n}}$. 
And from the construction process we have $\forall i \in \{1, \cdots, a\}$, $x_i$ can be expressed as a linear combination of $\tilde{x}'_1, \cdots, \tilde{x}'_{\tilde{a}}$; $\forall i \in \{n-b+1, \cdots, n\}$, $x_i$ can be expressed as a linear combination of $\tilde{x}'_{\tilde{n}-\tilde{b}+1}, \cdots, \tilde{x}'_{\tilde{n}}$. The same conclusion still holds if we switch $n, a, b, x$ and $\tilde{n}, \tilde{a}, \tilde{b}, \tilde{x}'$.

We define $\tilde{\Psi}$ = $\E[\tilde{x}' \tilde{x}'^\top]$. From the construction process, it is obvious that $\text{rank}(\tilde{\Psi})=n$. Moreover, $\forall i \in \{3, 4\}$, we define $\tilde{K}^{(i)} = K^{(i)} Q \Lambda^{\frac{1}{2}} \Omega^{-\top} \in \reals^{m_i \times \tilde{n}}$, and we have $\tilde{K}^{(i)} \tilde{x}' = K^{(i)} x$. In this way, we transformed $TaskAwareCoding(n, \Psi, a, b, Z, K^{(3)}, K^{(4)})$ to $TaskAwareCoding(\tilde{n}, \tilde{\Psi}, \tilde{a}, \tilde{b}, Z, \tilde{K}^{(3)}, \tilde{K}^{(4)})$.

Let $\{\tilde{E}^{(i,j)}|\forall (i,j) \in \mathcal{E}'\} \cup \{\tilde{D}^{(i)}|\forall i \in \{3,4\}\}$ be a solution for $TaskAwareCoding(\tilde{n}, \tilde{\Psi}, \tilde{a}, \tilde{b}, Z, \tilde{K}^{(3)}, \tilde{K}^{(4)})$. We can find $\{E^{(i,j)}|\forall (i,j) \in \mathcal{E}'\} \cup \{D^{(i)}|\forall i \in \{3,4\}\}$ for $TaskAwareCoding(n, \Psi, a, b, Z, K^{(3)}, K^{(4)})$, such that $\phi^{(i,j)} = \tilde{\phi}^{(i,j)}$, $\forall (i, j) \in \mathcal{E}$. For encoder paramters $\{E^{(i,j)}|\forall (i,j) \in \mathcal{E}'\}$, we take $E^{(1,3)}$ as an example. We let $E^{(1,3)} = \tilde{E}^{(1,3)} M$ where $M \in \reals^{\tilde{a} \times a}$ represents a linear transformation from $x_1, \cdots, x_a$ to $\tilde{x}'_1, \cdots, \tilde{x}'_{\tilde{a}}$. Moreover, for decoder parameters we let $D^{(i)} = Q \Lambda^{\frac{1}{2}} \Omega^{-\top} \tilde{D}^{(i)}$, $\forall i \in \{3, 4\}$. Therefore, $\forall i \in \{3,4\}$, we have $\tilde{K}^{(i)} \hat{\tilde{x}}' = K^{(i)} \hat{x}$. This implies the associated overall task losses for these two problems with these two sets of encoder and decoder parameters are the same.

Similarly, we can also transform a solution $\{E^{(i,j)}|\forall (i,j) \in \mathcal{E}'\} \cup \{D^{(i)}|\forall i \in \{3,4\}\}$ for $TaskAwareCoding(n, \Psi, a, b, Z, K^{(3)}, K^{(4)})$ to a set of parameters $\{\tilde{E}^{(i,j)}|\forall (i,j) \in \mathcal{E}'\} \cup \{\tilde{D}^{(i)}|\forall i \in \{3,4\}\}$ for $TaskAwareCoding(\tilde{n}, \tilde{\Psi}, \tilde{a}, \tilde{b}, Z, \tilde{K}^{(3)}, \tilde{K}^{(4)})$, and obtain the same conclusion.

This implies that the optimal overall task losses for these two problems are equal, and the above transformation from an optimal solution for one problem actually yields an optimal solution for the other.
\end{proof}

Therefore, in the rest of the analysis, we simply assume $\text{rank}(\Psi) = n$, and we let the Cholesky decomposition of $\Psi$ be $LL^\top$, where $L \in \reals^{n \times n}$. Moreover, notice that $\phi^{(i,j)}$ is a linear transformation from $x$ and hence is also a linear transformation from $L^{-1} x$. Therefore, for the convenience of the following analysis we let $\phi^{(i,j)} = \Phi^{(i,j)\top} L^{-1} x$ where $\Phi^{(i,j)} \in \reals^{n \times Z}$ is a transformation matrix. Furthermore, we assume $Z \leq n$, or else the network bandwidth is enough to make $\mathcal{L}^*_{\text{total}} = 0$.

For task matrix $K^{(i)}$, $\forall i \in \{3, 4\}$, we define Gram matrix $S^{(i)} = L^\top K^{(i)\top} K^{(i)} L \in \reals^{n \times n}$. Moreover, let the eigen-values in descending order and the corresponding normalized eigen-vectors of $S^{(i)}$ be $\mu^{(i)}_{1}, \mu^{(i)}_{2}, \cdots, \mu^{(i)}_{n}$ and $u^{(i)}_{1}, u^{(i)}_{2}, \cdots, u^{(i)}_{n}$, respectively. Since node 3 receives $[\phi^{(1,3)\top}, \phi^{(5,6)\top}]^\top$ which has $2Z$ dimensions, according to PCA, 
we have $\mathcal{L}^{(3)} \geq \sum_{j={2Z+1}}^n \mu^{(3)}_{j}$.
Similarly, $\mathcal{L}^{(4)} \geq \sum_{j={2Z+1}}^n \mu^{(4)}_{j}$. Therefore, $\mathcal{L}_{\text{total}} \geq \mathcal{L}_{\text{total,lb}}$ where $\mathcal{L}_{\text{total,lb}} = \sum_{i\in\{3,4\}} \sum_{j={2Z+1}}^n \mu^{(i)}_{j}$.

Ideally, we want to find an optimal solution associated with $\mathcal{L}_{\text{total,lb}}$, but $\mathcal{L}_{\text{total,lb}}$ may not be always achievable. Hence in the next two subsections, we focus on exploring the necessary conditions and sufficient conditions for $\mathcal{L}^*_{\text{total}} = \mathcal{L}_{\text{total,lb}}$.

For further analysis, $\forall i \in \{3, 4\}$, we define 
\begin{align}
U^{(i)} = [u^{(i)}_{1}, u^{(i)}_{2}, \cdots, u^{(i)}_{\min\{2Z,n\}}] \in \reals^{n \times \min\{2Z,n\}},
\end{align}
where the column vectors of $U^{(i)}$ are the top-$\min\{2Z,n\}$ normalized eigen-vectors of $S^{(i)}$. Making $\cl(U^{(3)}) \subseteq \cl([\Phi^{(1,3)}, \Phi^{(5,6)}])$ and $\cl(U^{(4)}) \subseteq \cl([\Phi^{(2,4)}, \Phi^{(5,6)}])$ is one way to achieve $\mathcal{L}_{\text{total,lb}}$. Moreover, we let $U^{(1)} \in \reals^{n \times a}$ and $U^{(2)} \in \reals^{n \times b}$ be matrices whose column vectors are the first $a$ and last $b$ column vectors of matrix $L$, respectively. The network topology constrains $\cl(\Phi^{(1,3)}), \cl(\Phi^{(1,5)}) \subseteq \cl(U^{(1)})$ and $\cl(\Phi^{(2,4)}), \cl(\Phi^{(2,5)}) \subseteq \cl(U^{(2)})$. Therefore, we say $\Phi^{(1,3)}$ is \textit{valid} if $\cl(\Phi^{(1,3)}) \subseteq \cl(U^{(1)})$, and $\Phi^{(2,4)}$ is \textit{valid} if $\cl(\Phi^{(2,4)}) \subseteq \cl(U^{(2)})$. 
On the other hand, any $\Phi^{(5,6)}$ is valid, since $\forall \Phi^{(5,6)} \in \reals^{n \times Z}$, $\exists \Phi^{(1,5)}, \Phi^{(2,5)}$ and $E^{(5,6)}$ s.t. $\Phi^{(5,6)\top} = E^{(5,6)} [\Phi^{(1,5)}, \Phi^{(2,5)}]^\top$, $\cl(\Phi^{(1,5)}) \subseteq \cl(U^{(1)})$, and $\cl(\Phi^{(2,5)}) \subseteq \cl(U^{(2)})$.


Furthermore, we also let $r_+^{(i,j)} = \dm(\cl([U^{(i)}, U^{(j)}]))$ and $r_-^{(i,j)} = \dm(\cl(U^{(i)}) \cap \cl(U^{(j)}))$, $\forall i, j \in \{1, 2, 3, 4\}$, where $\dm(\cdot)$ is the dimension of a vector space.

\subsection{Necessary condition} \label{subsec:analysis_nc}

The following Theorem \ref{thm:nc} provides a necessary condition for achieving $\mathcal{L}_{\text{total,lb}}$ under a mild assumption. It constrains the dimensions of vector spaces from the perspective of network bandwidth.

\begin{theorem} \label{thm:nc}
Assume the eigen-gap $\mu^{(i)}_{\min\{2Z,n\}} - \mu^{(i)}_{\min\{2Z,n\}+1} > 0$ (define $\mu^{(i)}_{n+1}= 0$), $\forall i \in \{3, 4\}$. Then $\mathcal{L}_{\text{total,lb}}$ is achievable only when 
\begin{align}
r_+^{(3,4)} & \leq 3Z, \quad \textup{and}\label{eq:nc_1}\\
r_-^{(1,3)}, r_-^{(2,4)} & \geq \min\{Z,n-Z\}. \label{eq:nc_2}
\end{align}
\end{theorem}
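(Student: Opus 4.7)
The plan is to first reduce achievability of $\mathcal{L}_{\text{total,lb}}$ to explicit column-space containments by applying the single-source task-aware PCA characterization of Section \ref{subsec:preliminary_pca} at each receiver, and then to obtain each necessary condition by a short dimension count.

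For the reduction step, I would note that node 3's received signal equals $[\Phi^{(1,3)}, \Phi^{(5,6)}]^{\top} L^{-1} x$, so its reconstruction fits the PCA template of Section \ref{subsec:preliminary_pca} with effective encoder $E_{\text{eff}}^{(3)} = [\Phi^{(1,3)}, \Phi^{(5,6)}]^{\top} L^{-1}$ of bottleneck width $2Z$. Under the assumed eigen-gap, the PCA necessary condition forces $\cl(E_{\text{eff}}^{(3)\top}) \supseteq \sn(\{L^{-\top} u^{(3)}_1, \ldots, L^{-\top} u^{(3)}_{\min\{2Z, n\}}\})$, and canceling the invertible factor $L^{-\top}$ yields $\cl(U^{(3)}) \subseteq \cl([\Phi^{(1,3)}, \Phi^{(5,6)}])$. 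The analogous argument at node 4 gives $\cl(U^{(4)}) \subseteq \cl([\Phi^{(2,4)}, \Phi^{(5,6)}])$. Because $\mathcal{L}_{\text{total,lb}}$ is the sum of the per-receiver PCA lower bounds and each is tight only under the corresponding containment, achieving $\mathcal{L}_{\text{total,lb}}$ forces both containments simultaneously.

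Condition (\ref{eq:nc_1}) is then immediate by summing the two containments: $\cl([U^{(3)}, U^{(4)}]) \subseteq \cl([\Phi^{(1,3)}, \Phi^{(2,4)}, \Phi^{(5,6)}])$, a span of at most $3Z$ vectors, hence $r_+^{(3,4)} \leq 3Z$. For condition (\ref{eq:nc_2}), I would split on $2Z \leq n$ versus $2Z > n$. If $2Z \leq n$, the containment equates two spaces of dimension $2Z$, which is possible only when $\dim \cl(\Phi^{(1,3)}) = \dim \cl(\Phi^{(5,6)}) = Z$ and $\cl(\Phi^{(1,3)}) \cap \cl(\Phi^{(5,6)}) = \{0\}$; in particular $\cl(\Phi^{(1,3)}) \subseteq \cl(U^{(3)})$, and combined with the validity constraint $\cl(\Phi^{(1,3)}) \subseteq \cl(U^{(1)})$ this gives $r_-^{(1,3)} \geq Z = \min\{Z, n-Z\}$. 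If instead $2Z > n$, then $\cl(U^{(3)}) = \reals^n$ and Step~1 forces $\cl([\Phi^{(1,3)}, \Phi^{(5,6)}]) = \reals^n$; inclusion--exclusion then gives $\dim \cl(\Phi^{(1,3)}) \geq n - Z$, and since $\cl(\Phi^{(1,3)}) \subseteq \cl(U^{(1)}) = \cl(U^{(1)}) \cap \cl(U^{(3)})$ we obtain $r_-^{(1,3)} \geq n - Z = \min\{Z, n-Z\}$. The bound on $r_-^{(2,4)}$ follows by the symmetric argument applied at node 4.

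The main obstacle is the reduction step: one must carefully lift the single-source PCA characterization to the multi-hop setting, noting that only the column span of the aggregated effective encoder (not its block structure into $\Phi^{(1,3)}$ and $\Phi^{(5,6)}$) matters for the per-receiver task loss, and tracking the invertible $L^{-\top}$ change of variables so that the PCA eigenbasis maps cleanly onto $U^{(i)}$. Once this is in hand, both necessary conditions fall out from elementary dimension arithmetic, but the case dichotomy $2Z \leq n$ versus $2Z > n$ is essential for producing the unified $\min\{Z, n-Z\}$ expression.
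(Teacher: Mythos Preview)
Your proposal is correct and follows essentially the same approach as the paper: reduce via the PCA characterization to the column-space identities $\cl([\Phi^{(1,3)},\Phi^{(5,6)}])=\cl(U^{(3)})$ and $\cl([\Phi^{(2,4)},\Phi^{(5,6)}])=\cl(U^{(4)})$, then do dimension counts. The one stylistic difference is that for condition~(\ref{eq:nc_2}) the paper avoids your $2Z\le n$ versus $2Z>n$ case split by a single line: from $\cl(\Phi^{(1,3)})\subseteq\cl(U^{(1)})\cap\cl(U^{(3)})$ one gets $\dim\cl([\Phi^{(1,3)},\Phi^{(5,6)}])\le r_-^{(1,3)}+Z$, and if $r_-^{(1,3)}<\min\{Z,n-Z\}$ this is strictly less than $\min\{2Z,n\}=\dim\cl(U^{(3)})$, a contradiction.
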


\begin{proof}
If the eigen-gap $\mu^{(i)}_{\min\{2Z,n\}} - \mu^{(i)}_{\min\{2Z,n\}+1} > 0$, $\forall i \in \{3, 4\}$, then 
$\mathcal{L}_{\text{total,lb}}$ is achievable only when $\cl([\Phi^{(1,3)}, \Phi^{(5,6)}]) = \cl(U^{(3)})$ and $\cl([\Phi^{(2,4)}, \Phi^{(5,6)}]) = \cl(U^{(4)})$, which further implies $\cl([\Phi^{(1,3)}, \Phi^{(2,4)}, \Phi^{(5,6)}]) = \cl([U^{(3)}, U^{(4)}])$.

Notice that $\dm(\cl([\Phi^{(1,3)}, \Phi^{(2,4)}, \Phi^{(5,6)}]))\leq 3Z$. Thus when $r_+^{(3,4)} > 3Z$ it is impossible to make $\cl([\Phi^{(1,3)}, \Phi^{(2,4)}, \Phi^{(5,6)}]) = \cl([U^{(3)}, U^{(4)}])$. Hence $\mathcal{L}_{\text{total,lb}}$ is not achievable.

Moreover, if $r_-^{(1,3)} < \min\{Z,n-Z\}$, then $\dm(\cl([\Phi^{(1,3)}, \Phi^{(5,6)}])) \leq r_-^{(1,3)} + Z < \min\{2Z,n\} = \dm(\cl(U^{(3)}))$, which means we cannot make $\cl([\Phi^{(1,3)}, \Phi^{(5,6)}]) = \cl(U^{(3)})$. So $\mathcal{L}_{\text{total,lb}}$ is not achievable. Similarly, $\mathcal{L}_{\text{total,lb}}$ is not achievable when $r_-^{(2,4)} < \min\{Z,n-Z\}$.
\end{proof}

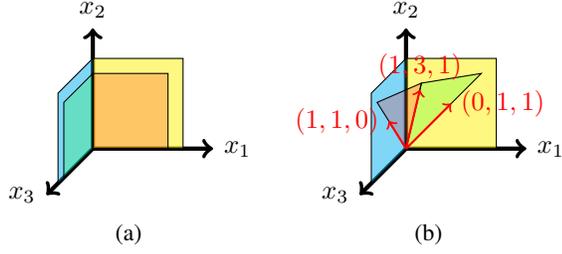
\begin{figure}[t]
\centering

\begin{tikzpicture}[state/.style = {shape=rectangle,thick,draw,minimum width=+1.7ex,minimum height=+1.7ex,inner sep=+.1pt}]
\matrix[column sep=1ex]{
\path (0,0.5em) node[state,fill=yellow, fill opacity=0.5,label=right:$\cl(U^{(1)})$]{};&
\path (0,0.5em) node[state,fill=cyan, fill opacity=0.5, label=right:$\cl(U^{(2)})$]{};&
\path (0,0.5em) node[state,fill=green, fill opacity=0.2, label=right:$\cl(U^{(3)})$]{};&
\path (0,0.5em) node[state,fill=red,fill opacity=0.2, label=right:$\cl(U^{(4)})$]{};\\
};
\end{tikzpicture} 
\subfigure[]{
\begin{tikzpicture}[scale=0.4]
\pgfmathsetmacro{\lenmax}{4}
\pgfmathsetmacro{\lenbg}{3}
\pgfmathsetmacro{\len}{2.5}
\draw[->,ultra thick] (0,0,0)--(\lenmax,0,0) node[right]{$x_1$};
\draw[->,ultra thick] (0,0,0)--(0,\lenmax,0) node[above]{$x_2$};
\draw[->,ultra thick] (0,0,0)--(0,0,\lenmax) node[left]{$x_3$};
\draw[black,fill=yellow, fill opacity=0.5] (0,0,0) -- (\lenbg,0,0) -- (\lenbg,\lenbg,0) -- (0,\lenbg,0) -- cycle;
\draw[black,fill=cyan, fill opacity=0.5] (0,0,0) -- (0,\lenbg,0) -- (0,\lenbg,\lenbg) -- (0,0,\lenbg) -- cycle;
\draw[black,fill=red,fill opacity=0.2] (0,0,0) -- (\len,0,0) -- (\len,\len,0) -- (0,\len,0) -- cycle;
\draw[black,fill=green,fill opacity=0.2] (0,0,0) -- (0,\len,0) -- (0,\len,\len) -- (0,0,\len) -- cycle;
\end{tikzpicture}
\label{fig:exp1}
}
\subfigure[]{
\begin{tikzpicture}[scale=0.4]
\pgfmathsetmacro{\lenmax}{4}
\pgfmathsetmacro{\lenbg}{3}
\pgfmathsetmacro{\len}{2.5}
\draw[->,ultra thick] (0,0,0)--(\lenmax,0,0) node[right]{$x_1$};
\draw[->,ultra thick] (0,0,0)--(0,\lenmax,0) node[above]{$x_2$};
\draw[->,ultra thick] (0,0,0)--(0,0,\lenmax) node[left]{$x_3$};
\draw[black,fill=yellow, fill opacity=0.5] (0,0,0) -- (\lenbg,0,0) -- (\lenbg,\lenbg,0) -- (0,\lenbg,0) -- cycle;
\draw[black,fill=cyan, fill opacity=0.5] (0,0,0) -- (0,\lenbg,0) -- (0,\lenbg,\lenbg) -- (0,0,\lenbg) -- cycle;
\draw[black,fill=red,fill opacity=0.2] (0,0,0) -- (\len/3,\len,\len/3) -- (0,\len,\len) -- cycle;
\draw[black,fill=green,fill opacity=0.2] (0,0,0) -- (\len,\len,0) -- (\len/3,\len,\len/3) -- cycle;
\draw[->,thick,red] (0,0,0)--(0.9*\len/3,0.9*\len,0.9*\len/3) node[above]{$(1,3,1)$};
\draw[->,thick,red] (0,0,0)--(0.6*\len,0.6*\len,0) node[right]{ $(0,1,1)$};
\draw[->,thick,red] (0,0,0)--(0,0.6*\len,0.6*\len) node[left]{$(1,1,0)$};
\end{tikzpicture}
\label{fig:exp2}
}

\caption{Two illustrative examples for Theorem \ref{thm:nc}, where the left one doesn't achieve $\mathcal{L}_{\text{total,lb}}$ while the right one does. }
\label{fig:exp}
\end{figure}

To show the conditions in Theorem \ref{thm:nc} are only necessary but not sufficient, we present two examples in Fig. \ref{fig:exp}, where the left one doesn't achieve $\mathcal{L}_{\text{total,lb}}$ while the right one does. Here we have $n=3$, $\Psi = I$, $Z = 1$, $a = b = 2$. 
And we also assume eigen-gap $\mu^{(i)}_{2} - \mu^{(i)}_{3} > 0$, $\forall i \in \{3, 4\}$. Therefore, to achieve $\mathcal{L}_{\text{total,lb}}$, we must have $\cl([\Phi^{(1,3)}, \Phi^{(5,6)}]) = \cl(U^{(3)})$ and $\cl([\Phi^{(2,4)}, \Phi^{(5,6)}]) = \cl(U^{(4)})$.
In Fig. \ref{fig:exp1}, we assume $u^{(3)}_{1} = u^{(4)}_{1} = [0, 1, 0]^\top$, $u^{(3)}_{2} = [0, 0, 1]^\top$ and $u^{(4)}_{2} = [1, 0, 0]^\top$. So we have $r_+^{(3,4)} = 3$ and $r_-^{(1,3)} = r_-^{(1,4)} = 1$. The conditions in Theorem \ref{thm:nc} are satisfied, but we cannot make $\cl([\Phi^{(1,3)}, \Phi^{(5,6)}]) = \cl(U^{(3)})$ and $\cl([\Phi^{(2,4)}, \Phi^{(5,6)}]) = \cl(U^{(4)})$ simultaneously, and hence $\mathcal{L}_{\text{total,lb}}$ is not achievable. 
In Fig. \ref{fig:exp2}, we assume $u^{(3)}_{1} = u^{(4)}_{1} = \frac{1}{\sqrt{11}}[1, 1, 3]^\top$, $u^{(3)}_{2} = \frac{1}{\sqrt{66}} [4, -7, 1]^\top$ and $u^{(4)}_{2} = \frac{1}{\sqrt{66}} [-7, 4, 1]^\top$. For $\Phi^{(1,3)} = [0, 1, 1]^\top$, $\Phi^{(2,4)} = [1, 0, 1]^\top$ and $\Phi^{(5,6)} = [1, 3, 1]^\top$ (which are vectors belong to the intersections of different column spans), $\mathcal{L}_{\text{total,lb}}$ is achievable because $\cl([\Phi^{(1,3)}, \Phi^{(5,6)}]) = \cl(U^{(3)})$ and $\cl([\Phi^{(2,4)}, \Phi^{(5,6)}]) = \cl(U^{(4)})$.




\subsection{Sufficient Conditions} \label{subsec:analysis_sc}

We have seen that constrain the dimensions of vector spaces, as in Theorem \ref{thm:nc}, is not enough to achieve $\mathcal{L}_{\text{total,lb}}$. 
In the following theorem, we add a requirement of the data dependencies between different $U^{(i)}$'s on top of the necessary conditions, and hence the achievability of $\mathcal{L}_{\text{total,lb}}$ is guaranteed.

\begin{theorem}\label{thm:sf0}
If Eq. (\ref{eq:nc_1}) and (\ref{eq:nc_2}) hold, and 
\begin{align}
\cl(U^{(3)}) = \sn(
& (\cl(U^{(1)}) \cap \cl(U^{(3)})) \cup \notag\\
& (\cl(U^{(3)}) \cap \cl(U^{(4)}))), \label{eq:sf_1}\\
\cl(U^{(4)}) = \sn(
& (\cl(U^{(2)}) \cap \cl(U^{(4)})) \cup \notag\\
& (\cl(U^{(3)}) \cap \cl(U^{(4)}))), \label{eq:sf_2}
\end{align}
then $\mathcal{L}_{\text{total,lb}}$ is achievable.
\end{theorem}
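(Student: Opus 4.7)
The plan is constructive: I will exhibit valid encoder column spaces $A_1=\cl(\Phi^{(1,3)})$, $A_2=\cl(\Phi^{(2,4)})$, $A_5=\cl(\Phi^{(5,6)})$ with $A_1+A_5=\cl(U^{(3)})$ and $A_2+A_5=\cl(U^{(4)})$; by the PCA characterization in Section~\ref{subsec:preliminary_pca} this forces both destination task losses to attain their individual PCA optima simultaneously, so $\mathcal{L}_{\text{total}}=\mathcal{L}_{\text{total,lb}}$. Writing $V_{13}=\cl(U^{(1)})\cap\cl(U^{(3)})$, $V_{24}=\cl(U^{(2)})\cap\cl(U^{(4)})$, $V_{34}=\cl(U^{(3)})\cap\cl(U^{(4)})$, $K_{13}=V_{13}\cap V_{34}$, and $K_{24}=V_{24}\cap V_{34}$, Eqs.~(\ref{eq:sf_1})--(\ref{eq:sf_2}) become $\cl(U^{(3)})=V_{13}+V_{34}$ and $\cl(U^{(4)})=V_{24}+V_{34}$, while a short dimension count converts Eqs.~(\ref{eq:nc_1})--(\ref{eq:nc_2}) into $\dim V_{34}\geq Z$, $\dim(V_{34}/K_{13})\leq Z$, and $\dim(V_{34}/K_{24})\leq Z$.

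The crux is to construct $A_5\subseteq V_{34}$ of dimension $Z$ with $A_5+K_{13}=A_5+K_{24}=V_{34}$; equivalently, $A_5$ must surject onto both quotients $V_{34}/K_{13}$ and $V_{34}/K_{24}$. I would use an explicit basis adapted to the triple structure inside $V_{34}$: fix a basis $\{e_\alpha\}$ of $K_0=K_{13}\cap K_{24}$, extend to $(e,f)$ of $K_{13}$, to $(e,g)$ of $K_{24}$, and then to $(e,f,g,h)$ of $V_{34}$. Assuming WLOG $|f|\leq|g|$, let the initial $A_5'$ be the span of the diagonal vectors $\{f_i+g_i\}_i$, the leftover $\{g_j\}_{j>|f|}$, and all of $\{h_k\}_k$, then pad arbitrarily inside $V_{34}$ up to dimension $Z$ (possible because $\dim V_{34}\geq Z$). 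The key observation is that $f_i+g_i\equiv g_i\pmod{K_{13}}$ and $f_i+g_i\equiv f_i\pmod{K_{24}}$, so already $A_5'$ hits a full basis of $V_{34}/K_{13}$ and of $V_{34}/K_{24}$; padding only preserves both surjectivities.

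Given such $A_5$, set $\Phi^{(5,6)}$ to be any $n\times Z$ matrix with column space $A_5$, which is valid by the paper's preceding remark that every $\Phi^{(5,6)}\in\reals^{n\times Z}$ is realizable. Combining $V_{13}+V_{34}=\cl(U^{(3)})$ with $A_5+K_{13}=V_{34}$ gives $V_{13}+A_5=\cl(U^{(3)})$, so the natural map $V_{13}\to\cl(U^{(3)})/A_5$ is surjective onto a space of dimension $c-Z\leq Z$ (with $c:=\min\{2Z,n\}$); pick any $A_1\subseteq V_{13}$ of dimension at most $Z$ mapping onto this quotient, and let $\Phi^{(1,3)}$ be any matrix with column space $A_1$. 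Validity follows from $A_1\subseteq V_{13}\subseteq\cl(U^{(1)})$. Build $A_2$ and $\Phi^{(2,4)}$ symmetrically. Then $\cl([\Phi^{(1,3)},\Phi^{(5,6)}])=A_1+A_5=\cl(U^{(3)})$ and likewise at node 4, so equipping each destination with the corresponding optimal PCA decoder $D^{(i)}$ closes the argument.

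The main obstacle is the simultaneously transverse $A_5$: a single $Z$-dimensional subspace of $V_{34}$ must complement both $K_{13}$ and $K_{24}$ inside $V_{34}$. This is exactly where the sufficient conditions—which force $V_{34}$ to fill the gap between $V_{13}$ and $\cl(U^{(3)})$ and analogously for node 4—mesh with the dimension bounds from the necessary conditions that cap $\dim(V_{34}/K_{13})$ and $\dim(V_{34}/K_{24})$ by $Z$. The diagonal device $f_i+g_i$ is the concrete gadget that packs both surjectivities into one $A_5$.
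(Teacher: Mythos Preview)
Your argument is correct and rests on the same core gadget as the paper: your diagonal vectors $f_i+g_i$ with $f_i\in K_{13}\setminus K_0$ and $g_i\in K_{24}\setminus K_0$ are exactly the paper's ``pair-wise sums'' in its step~(iii). The organization differs, though. The paper builds $\Phi^{(1,3)},\Phi^{(2,4)},\Phi^{(5,6)}$ column by column in four interleaved steps (first extending a basis of $V_{34}$ to bases of $\cl(U^{(3)}),\cl(U^{(4)})$ inside $V_{13},V_{24}$; then drawing from $K_{13}\cap K_{24}$; then the pair-wise sums, with the summands placed in $\Phi^{(1,3)},\Phi^{(2,4)}$ and the sums in $\Phi^{(5,6)}$; then completing a basis of $V_{34}$), and splits into the regimes $2Z\le n$ and $2Z>n$, handling the latter separately in the appendix. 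Your subspace/quotient framing first isolates $A_5\subseteq V_{34}$ as a simultaneous complement to $K_{13}$ and to $K_{24}$, and only then reads off $A_1,A_2$; this treats both regimes uniformly and makes the role of each hypothesis more transparent. One minor attribution fix: the bound $\dim(V_{34}/K_{13})\le Z$ is not a consequence of (\ref{eq:nc_1})--(\ref{eq:nc_2}) alone; the dimension count $\dim V_{34}-\dim K_{13}=\dim\cl(U^{(3)})-\dim V_{13}\le c-\min\{Z,n-Z\}=Z$ already uses $\cl(U^{(3)})=V_{13}+V_{34}$ from (\ref{eq:sf_1}).
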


Eq. (\ref{eq:sf_1}) (and similarly for Eq. (\ref{eq:sf_2})) has the following interpretation: we can find vectors in $\cl(U^{(1)})$ that extend a basis of $\cl(U^{(3)}) \cap \cl(U^{(4)})$ to a basis of $\cl(U^{(3)})$. This makes it possible for us to assign column vectors of $\Phi^{(1,3)}$ to achieve $\mathcal{L}_{\text{total,lb}}$ (which is not possible for Fig. \ref{fig:exp1}). 


For the sake of clarity, we only provide the proof of Theorem \ref{thm:sf0} when $2Z \leq n$. The proof idea when $2Z > n$ is quite similar and we put it in the appendix due to space limit.

\begin{proof}
Since $2Z \leq n$, we have $r_-^{(3,4)} = 4Z - r_+^{(3,4)} \geq Z$, and $r_-^{(1,3)}, r_-^{(2,4)} \geq Z$ according to Eq. (\ref{eq:nc_1}) and (\ref{eq:nc_2}). We will construct valid $\Phi^{(1,3)}$, $\Phi^{(2,4)}$ and $\Phi^{(5,6)}$ such that $\mathcal{L}_{\text{total,lb}}$ is achievable. There are four steps for construction:

$\bullet$ i) According to Eq. (\ref{eq:sf_1}), we can find $2Z - r_-^{(3,4)}$ vectors in $\cl(U^{(1)}) \cap \cl(U^{(3)})$ that extend a basis of $\cl(U^{(3)}) \cap \cl(U^{(4)})$ to a basis of $\cl(U^{(3)})$. We let them be the first $2Z - r_-^{(3,4)}$ column vectors of $\Phi^{(1,3)}$. We similarly determine the first $2Z - r_-^{(3,4)}$ column vectors of $\Phi^{(2,4)}$.

$\bullet$ ii) Suppose space $\cl(U^{(1)}) \cap \cl(U^{(2)}) \cap \cl(U^{(3)}) \cap \cl(U^{(4)})$ has $p$ dimensions. We randomly choose $\min\{p, r_-^{(3,4)} - Z\}$ linear independent vectors from this space and assign them as some of the non-determined column vectors of $\Phi^{(1,3)}$ and $\Phi^{(2,4)}$. If $p \geq r_-^{(3,4)} - Z$, then all the column vectors of $\Phi^{(1,3)}$ and $\Phi^{(2,4)}$ are determined, and we will skip the following step iii.

$\bullet$ iii) The space $\cl(U^{(1)}) \cap \cl(U^{(3)}) \cap \cl(U^{(4)})$ has at least $r_-^{(3,4)} + r_-^{(1,3)} -\dm(\cl(U^{(3)})) \geq  r_-^{(3,4)} - Z$ dimensions. Similarly, the space $\cl(U^{(2)}) \cap \cl(U^{(3)}) \cap \cl(U^{(4)})$ has at least $r_-^{(3,4)} - Z$ dimensions. Hence we can choose $r_-^{(3,4)} - Z - p$ linear independent vectors from these two spaces respectively, such that they are independent to the $p$ vectors chosen in step ii. Moreover, these two sets of vectors are also independent since they do not belong to $\cl(U^{(1)}) \cap \cl(U^{(2)}) \cap \cl(U^{(3)}) \cap \cl(U^{(4)})$. Hence we let them be the remaining non-determined column vectors of $\Phi^{(1,3)}$ and $\Phi^{(2,4)}$ respectively, and we let the first $r_-^{(3,4)} - Z - p$ non-determined column vectors of $\Phi^{(5,6)}$ be the pair-wise sums\footnote{
Here pair-wise sums of vectors $\xi_1, \xi_2, \cdots, \xi_p$ and vectors $\chi_1, \chi_2, \cdots, \chi_p$ are vectors $\xi_1 + \chi_1, \xi_2 + \chi_2, \cdots, \xi_p + \chi_p$. With vectors $\xi_1, \xi_2, \cdots, \xi_p$ and vectors $\xi_1 + \chi_1, \xi_2 + \chi_2, \cdots, \xi_p + \chi_p$, one can decode vectors $\chi_1, \chi_2, \cdots, \chi_p$ through $\chi_i = -\xi_i + (\xi_i + \chi_i)$, $\forall i \in \{1, 2, \cdots, p\}$. 
}
of these two sets of vectors.

$\bullet$ iv) We make the remaining non-determined column vectors of $\Phi^{(5,6)}$ be the vectors that extend the vectors chosen in step ii and iii to a basis of $\cl(U^{(3)}) \cap \cl(U^{(4)})$.

The constructed $\Phi^{(1,3)}$ and $\Phi^{(2,4)}$ are valid, and we also have $\cl([\Phi^{(1,3)}, \Phi^{(5,6)}]) = \cl(U^{(3)})$ and $\cl([\Phi^{(2,4)}, \Phi^{(5,6)}]) = \cl(U^{(4)})$. Therefore, $\mathcal{L}_{\text{total,lb}}$ is achievable.
\end{proof}

We further have the following two corollaries.

\begin{corollary}\label{thm:sf2}
If Eq. (\ref{eq:nc_1}), (\ref{eq:sf_1}) and (\ref{eq:sf_2}) hold, and
\begin{align}
\cl(U^{(3)}) \cap \cl(U^{(4)}) \subseteq \cl(U^{(1)}) \cap \cl(U^{(2)}),\label{eq:sf_3}
\end{align} 
then $\mathcal{L}_{\text{total,lb}}$ is achievable.
\end{corollary}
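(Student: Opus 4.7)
The plan is to show that the extra hypothesis Eq.~(\ref{eq:sf_3}), combined with Eq.~(\ref{eq:sf_1}) and (\ref{eq:sf_2}), already forces Eq.~(\ref{eq:nc_2}) to hold. Once this is verified, all hypotheses of Theorem~\ref{thm:sf0} are in place and the corollary follows by direct invocation of that theorem.

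First, I would use Eq.~(\ref{eq:sf_3}), in particular the inclusion $\cl(U^{(3)}) \cap \cl(U^{(4)}) \subseteq \cl(U^{(1)})$, to upgrade Eq.~(\ref{eq:sf_1}) into the stronger statement $\cl(U^{(3)}) \subseteq \cl(U^{(1)})$. Both subspaces on the right-hand side of Eq.~(\ref{eq:sf_1}), namely $\cl(U^{(1)}) \cap \cl(U^{(3)})$ and $\cl(U^{(3)}) \cap \cl(U^{(4)})$, then lie inside $\cl(U^{(1)})$, so the span of their union is also inside $\cl(U^{(1)})$. This forces $\cl(U^{(1)}) \cap \cl(U^{(3)}) = \cl(U^{(3)})$, hence $r_-^{(1,3)} = \dm(\cl(U^{(3)})) = \min\{2Z, n\}$. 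A symmetric argument using Eq.~(\ref{eq:sf_2}) and the inclusion $\cl(U^{(3)}) \cap \cl(U^{(4)}) \subseteq \cl(U^{(2)})$ yields $\cl(U^{(4)}) \subseteq \cl(U^{(2)})$ and $r_-^{(2,4)} = \min\{2Z, n\}$.

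Next, a short case split on whether $2Z \le n$ or $2Z > n$ confirms $\min\{2Z, n\} \ge \min\{Z, n-Z\}$, so Eq.~(\ref{eq:nc_2}) is automatically satisfied. Together with Eq.~(\ref{eq:nc_1}), (\ref{eq:sf_1}) and (\ref{eq:sf_2}), every hypothesis of Theorem~\ref{thm:sf0} is now in hand, and I conclude that $\mathcal{L}_{\text{total,lb}}$ is achievable.

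The step most likely to cause hesitation is spotting that Eq.~(\ref{eq:sf_3}) is in fact strong enough to collapse Eq.~(\ref{eq:sf_1}) and (\ref{eq:sf_2}) all the way down to the clean containments $\cl(U^{(3)}) \subseteq \cl(U^{(1)})$ and $\cl(U^{(4)}) \subseteq \cl(U^{(2)})$, rather than merely strengthening some intermediate dimension identity. Once that observation is made, the remainder is a bookkeeping check and an appeal to Theorem~\ref{thm:sf0}; no new construction of $\Phi^{(1,3)}$, $\Phi^{(2,4)}$ or $\Phi^{(5,6)}$ is required beyond what the proof of Theorem~\ref{thm:sf0} already provides.
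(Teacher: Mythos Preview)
Your proposal is correct and follows essentially the same approach as the paper: combine Eq.~(\ref{eq:sf_3}) with Eq.~(\ref{eq:sf_1}) and (\ref{eq:sf_2}) to obtain $\cl(U^{(3)}) \subseteq \cl(U^{(1)})$ and $\cl(U^{(4)}) \subseteq \cl(U^{(2)})$, deduce $r_-^{(1,3)} = r_-^{(2,4)} = \min\{2Z,n\}$ so that Eq.~(\ref{eq:nc_2}) holds, and then invoke Theorem~\ref{thm:sf0}. Your version is in fact slightly more explicit than the paper's, which omits the verification that $\min\{2Z,n\} \ge \min\{Z,n-Z\}$.
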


\begin{proof}
According to Eq. (\ref{eq:sf_1}) and (\ref{eq:sf_3}), $ \cl(U^{(3)}) \subseteq \cl(U^{(1)})$, which means $r_-^{(1,3)} = \min\{2Z, n\}$. Similarly, $r_-^{(2,4)} = \min\{2Z, n\}$. Hence Eq. (\ref{eq:nc_2}) is satisfied, and $\mathcal{L}_{\text{total,lb}}$ is achievable. 
\end{proof}

In fact, in Corollary \ref{thm:sf2}, we don't even need network coding to achieve $\mathcal{L}_{\text{total,lb}}$. If $2Z \leq n$, then step iii, in which network coding is needed, will always be skipped in the proof of Theorem \ref{thm:sf0}; if $2Z > n$, Eq. (\ref{eq:sf_3}) implies $\cl(U^{(1)}) = \cl(U^{(2)}) = \reals^n$, and network coding is also not needed.

\begin{corollary}\label{thm:sf3}
If Eq. (\ref{eq:nc_1}), (\ref{eq:sf_1}) and (\ref{eq:sf_2}) hold, and
\begin{align}
n \leq Z + \min\{a, b\}, \label{eq:sf_4}
\end{align}
then $\mathcal{L}_{\text{total,lb}}$ is achievable.
\end{corollary}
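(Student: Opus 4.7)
The plan is to deduce Corollary \ref{thm:sf3} directly from Theorem \ref{thm:sf0}. All three of (\ref{eq:nc_1}), (\ref{eq:sf_1}) and (\ref{eq:sf_2}) are already assumed, so it suffices to check that (\ref{eq:sf_4}) forces the remaining hypothesis (\ref{eq:nc_2}), namely $r_-^{(1,3)}, r_-^{(2,4)} \geq \min\{Z, n-Z\}$; once that is in hand, Theorem \ref{thm:sf0} closes the argument.

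First I would record the dimensions of the four subspaces in play. Because $L$ is the Cholesky factor of a rank-$n$ covariance, any $a$ of its columns are linearly independent, giving $\dm(\cl(U^{(1)})) = a$ and symmetrically $\dm(\cl(U^{(2)})) = b$. The matrices $U^{(3)}$ and $U^{(4)}$ collect the top $\min\{2Z,n\}$ orthonormal eigenvectors of their respective Gram matrices, so $\dm(\cl(U^{(i)})) = \min\{2Z,n\}$ for $i \in \{3,4\}$. Applying the standard subspace dimension identity inside $\reals^n$ then yields
\begin{align}
r_-^{(1,3)} \geq a + \min\{2Z,n\} - n, \quad r_-^{(2,4)} \geq b + \min\{2Z,n\} - n.
\end{align}

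Next I would split on the sign of $2Z - n$. When $2Z \leq n$, $\min\{Z, n-Z\} = Z$, and the first inequality reduces to $r_-^{(1,3)} \geq a + 2Z - n \geq Z$ by (\ref{eq:sf_4}); the bound on $r_-^{(2,4)}$ follows identically using $b$ in place of $a$. When $2Z > n$, the matrix $U^{(3)}$ already has $n$ orthonormal columns and hence spans all of $\reals^n$, so $r_-^{(1,3)} = \dm(\cl(U^{(1)})) = a$; now $\min\{Z, n-Z\} = n-Z$, and (\ref{eq:sf_4}) again gives $a \geq n - Z$. The same argument applies to $r_-^{(2,4)}$, so (\ref{eq:nc_2}) holds.

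I do not anticipate any substantive obstacle. The only subtlety worth double-checking is the case split on whether $2Z$ exceeds $n$, since in that regime $\cl(U^{(3)})$ and $\cl(U^{(4)})$ collapse to the ambient space and the lower bound on $r_-^{(1,3)}$ comes directly from $\dm(\cl(U^{(1)})) = a$ rather than from the general dimension formula. With (\ref{eq:nc_2}) verified, invoking Theorem \ref{thm:sf0} immediately yields achievability of $\mathcal{L}_{\text{total,lb}}$.
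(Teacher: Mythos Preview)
Your proposal is correct and follows essentially the same route as the paper: verify that (\ref{eq:sf_4}) implies (\ref{eq:nc_2}) via the subspace dimension formula, then invoke Theorem \ref{thm:sf0}. The only cosmetic difference is that you split into the cases $2Z \le n$ and $2Z > n$, whereas the paper handles both at once via the identity $\min\{2Z,n\} - Z = \min\{Z,\,n-Z\}$, which lets the bound $r_-^{(1,3)} \ge a + \min\{2Z,n\} - n \ge (n-Z) + \min\{2Z,n\} - n = \min\{Z,\,n-Z\}$ go through in one line.
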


\begin{proof}
According to Eq. (\ref{eq:sf_4}), we know $\dim(\cl(U^{(1)})) = a \geq n - Z$. Since $\dim(\cl(U^{(3)})) = \min\{2Z, n\}$, we know $r_-^{(1,3)} \geq \dim(\cl(U^{(1)})) + \dim(\cl(U^{(3)})) - n \geq \min\{Z, n - Z\}$. Similarly, $r_-^{(1,3)} \geq \min\{Z, n - Z\}$. Hence Eq. (\ref{eq:nc_2}) is satisfied, and $\mathcal{L}_{\text{total,lb}}$ is achievable. 
\end{proof}

\section{Algorithm}\label{sec:algorithm}

In the last section
we have discussed the sufficient conditions for achieving $\mathcal{L}_\text{total,lb}$, and corresponding optimal encoder and decoder parameters can be determined analytically. In the general case when these sufficient conditions are not satisfied, we resort to standard gradient descent algorithm to determine the encoder and decoder parameters jointly. The encoders and decoders are connected as per network information flow (i.e., Eq. (\ref{eq:encoding_1})-(\ref{eq:decoding_2})).
We initialize encoder and decoder parameters randomly and update them for multiple epochs. In each epoch, we update $E^{(i, j)}$ and $D^{(i)}$ through back-propagation as follows:
\begin{align}
E^{(i, j)} \leftarrow E^{(i, j)} - \eta \frac{\nabla \mathcal{L}_\text{total}}{\nabla E^{(i, j)}}, \quad \forall (i, j) \in \mathcal{E}'; \\
D^{(i)} \leftarrow D^{(i)} - \eta \frac{\nabla \mathcal{L}_\text{total}}{\nabla D^{(i)}}, \quad \forall i \in \{3, 4\}
\end{align}
where $\eta$ is the learning rate.

\begin{figure}[t]
\centering

\begin{tikzpicture}[scale=0.5]
    \begin{customlegend}[
    legend entries={Task-aware coding,Task-aware no coding,Task-agnostic coding (appendix),Coding benchmark,Lower bound $\mathcal{L}_{\text{total,lb}}$},
    legend columns=2,
    legend style={
            /tikz/column 1/.style={
                column sep=5pt,
            },
            /tikz/column 2/.style={
                column sep=5pt,
                font=\scriptsize,
            },
            /tikz/column 3/.style={
                column sep=5pt,
            },
            /tikz/column 4/.style={
                column sep=5pt,
                font=\scriptsize,
            },
            /tikz/column 5/.style={
                column sep=5pt,
            },
            /tikz/column 6/.style={
                column sep=5pt,
                font=\scriptsize,
            },
            /tikz/column 7/.style={
                column sep=5pt,
            },
            /tikz/column 8/.style={
                column sep=5pt,
                font=\scriptsize,
            },
            /tikz/column 9/.style={
                column sep=5pt,
            },
            /tikz/column 10/.style={
                font=\scriptsize,
            },
        }]
    \addlegendimage{blue,mark=*,line width=1pt}
    \addlegendimage{red,mark=x,line width=1pt}
    \addlegendimage{green!60!black,mark=triangle,line width=1pt}
    \addlegendimage{brown,mark=diamond,line width=1pt}
    \addlegendimage{black,dashed, mark=none,line width=1pt}
    \end{customlegend}
\end{tikzpicture}


\begin{minipage}[t]{0.49\linewidth}
\strut\vspace*{-\baselineskip}\newline
\subfigure[]{
\begin{tikzpicture}[scale=0.55]
    \pgfplotsset{normalsize,samples=10}
    \begin{axis}[ height=6cm, width=8cm,
                  xlabel={$r^{(3,4)}_+$},
                  ylabel={Overall task loss $\mathcal{L}_{\text{total}}$},
                  xlabel style={font=\Large},
                  ylabel style={font=\large},
                  scaled y ticks=false,
                  every axis plot/.append style={ultra thick},
                  mark size=3pt ]
            \addplot [blue,mark=*, error bars/.cd, y dir=both, y explicit,
                      error bar style={line width=2pt,solid},
                      error mark options={line width=1pt,mark size=4pt,rotate=90}]
                      table [x=x, y=y, y error=y-err]{%
                        x y y-err
16 6.911812548195489e-07 3.073093965247263e-07
17 5.066640246184788e-06 4.731035325567704e-06
18 1.4494722576517923e-08 8.568561623689385e-09
19 3.7463711480241007e-07 5.220257716285899e-07
20 2.449196049536813e-07 3.004506629594412e-07
21 1.040607480574976e-08 5.747604562830079e-09
22 4.4599350823704945e-08 2.4596522828269098e-08
23 2.4981650065514156e-07 2.1498828483884266e-07
24 0.003985947765315031 0.0018782044144822862
25 0.6287877687241475 0.3296177718246637
26 1.3985435289708184 0.5970179153308545
27 2.0173303157938416 0.5896322900229524
28 2.7016288557100814 0.7252355307733556
29 3.2574517610267923 0.666716962805083
30 3.950915976422186 0.8473061459132485
31 4.470203985259172 0.7335067487170107
32 5.222534629085299 0.6334240631758826
                      };
            \addplot [red,mark=x, error bars/.cd, y dir=both, y explicit,
                      error bar style={line width=2pt,solid},
                      error mark options={line width=1pt,mark size=4pt,rotate=90}]
                      table [x=x, y=y, y error=y-err]{%
                        x y y-err
16 1.742752311752271e-29 5.730866711219217e-30
17 2.2890390381711667e-06 3.208716095576786e-06
18 1.2096275430612091e-13 1.0219516981021993e-13
19 0.005503899890213851 0.005503531892064301
20 1.938916179175619e-09 2.4857361113876252e-09
21 1.211837155947631e-06 1.0760555426549954e-06
22 4.933278383830637e-06 3.3127497990603083e-06
23 4.90165416283887e-10 3.429956141126788e-10
24 6.546608913181524e-05 4.8794794796865374e-05
25 0.6649857977148086 0.35021088295879266
26 1.439081132176372 0.6905948150178822
27 2.088807174345342 0.6198602793712682
28 2.7754194854035577 0.7106754686004189
29 3.4392795633906537 0.7687401368030804
30 4.062084090001905 0.8930110456623133
31 4.679884050573273 0.7165165233855438
32 5.2653395974012 0.8237096580399346
                      };
            \addplot [brown,mark=diamond, error bars/.cd, y dir=both, y explicit,
                      error bar style={line width=2pt,solid},
                      error mark options={line width=1pt,mark size=4pt,rotate=90}]
                      table [x=x, y=y, y error=y-err]{%
                        x y y-err
16 1.5833892589714623e-29 4.635687923215124e-30
17 1.1260613734286689 0.38288194039503937
18 1.3631532064732141 0.44691973668179164
19 1.6841791041242855 0.48955672260661887
20 2.4659839428046615 0.7137615365849697
21 2.9211683011400527 0.7623771591516395
22 3.202657351318381 0.7498832656669853
23 3.486696320035632 0.7257083609174176
24 3.7184039463777374 0.6532013575787966
25 4.407380780024825 0.8474760654931122
26 5.413272170163493 0.9942789025141056
27 4.792027967436242 0.8069740159750458
28 5.069207318588853 0.7418878368767738
29 5.385768487248315 0.824117134123108
30 5.7433438333280336 0.8603413886063387
31 5.4370998900855 0.7808543963950577
32 5.265337455493307 0.8237117375582347
                      };
            \addplot [black,dashed, mark=none, error bars/.cd, y dir=both, y explicit,
                      error bar style={line width=2pt,solid},
                      error mark options={line width=1pt,mark size=4pt,rotate=90}]
                      table [x=x, y=y]{%
                        x y
16 0
32 0
                      };
    \end{axis}
\end{tikzpicture}
\label{fig:synthetic_loss_r34+}
}
\end{minipage}
\begin{minipage}[t]{0.49\linewidth}
\strut\vspace*{-\baselineskip}\newline
\subfigure[]{
\begin{tikzpicture}[scale=0.55]
    \pgfplotsset{normalsize,samples=10}
    \begin{axis}[ height=6cm, width=8cm,
                  xlabel={$a (=b)$},
                  xlabel style={font=\Large},
                  ylabel style={font=\large},
                  scaled y ticks=false,
                  every axis plot/.append style={ultra thick},
                  mark size=3pt ]
            \addplot [blue,mark=*, error bars/.cd, y dir=both, y explicit,
                      error bar style={line width=2pt,solid},
                      error mark options={line width=1pt,mark size=4pt,rotate=90}]
                      table [x=x, y=y, y error=y-err]{%
                        x y y-err
16 6.7887583596120065 1.3249825957190458
17 6.081541798097357 1.329619261431759
18 5.471139056021478 1.389162055664233
19 3.7577250566057714 1.1256888558544584
20 2.884418669358931 1.0126211404453778
21 1.8355849644996638 0.8183213708915495
22 0.8821168576103318 0.3862255576440212
23 0.4083817648606587 0.2461480301702073
24 1.067947262089434e-11 2.5716226398142827e-12
25 9.46066349056175e-12 1.6935688272174662e-12
26 1.1577685293234012e-11 2.4185845724366454e-12
27 7.709805233056714e-08 1.0003009778788454e-07
28 1.417006439065208e-11 3.3575133693265813e-12
29 1.3213822034014166e-11 7.773963355986723e-12
30 7.817713672374309e-12 1.7577774127902056e-12
31 9.137685871359337e-12 2.6289755484753194e-12
32 9.081275297734164e-12 1.733473704279905e-12
                      };
            \addplot [red,mark=x, error bars/.cd, y dir=both, y explicit,
                      error bar style={line width=2pt,solid},
                      error mark options={line width=1pt,mark size=4pt,rotate=90}]
                      table [x=x, y=y, y error=y-err]{%
                        x y y-err
16 37.95434945235194 9.226613349288886
17 34.44220156560836 8.794525333239863
18 31.11666840229244 8.347713724698014
19 27.361329927213163 8.255567990106641
20 24.732487391550794 8.032952735060224
21 20.33029476627106 7.352346514783946
22 16.80647585981944 7.028196663272082
23 14.701982840256784 6.918209261016229
24 11.03403817465439 5.921611040509638
25 9.513402956501773 5.690230502401569
26 7.5491721214631085 5.165000834939477
27 6.111124602561819 4.871116008134628
28 3.910744764790789 3.9121252516009855
29 1.5215836375561946 2.132926536415867
30 3.07216461573355e-07 4.306223952321878e-07
31 1.4677685774388845e-08 2.0412965465358896e-08
32 0.0005916304464809701 0.00018367132264438609
                      };
            \addplot [brown,mark=diamond, error bars/.cd, y dir=both, y explicit,
                      error bar style={line width=2pt,solid},
                      error mark options={line width=1pt,mark size=4pt,rotate=90}]
                      table [x=x, y=y, y error=y-err]{%
                        x y y-err
16 18.486370519701005 2.554308048686343
17 17.566420932595655 2.646875405570866
18 16.786696249531595 2.73724470998191
19 15.373021622232717 3.1926975733545393
20 13.646280515199452 3.1634653399850095
21 10.943880382314727 2.7131735754466595
22 9.674201116063395 2.523222019557381
23 7.943644220784298 2.053148526004178
24 5.540677786405476 1.3818451471976547
25 4.829952306703767 1.242692871758804
26 4.070351294157554 1.6885254549526247
27 4.070351294157554 1.6885254549526243
28 4.070351294157554 1.6885254549526238
29 4.070351294157554 1.6885254549526243
30 4.070351294157554 1.6885254549526214
31 4.070351294157554 1.688525454952624
32 4.070351294157554 1.688525454952625
                      };
            \addplot [black,dashed, mark=none, error bars/.cd, y dir=both, y explicit,
                      error bar style={line width=2pt,solid},
                      error mark options={line width=1pt,mark size=4pt,rotate=90}]
                      table [x=x, y=y]{%
                        x y
16 0
32 0
                      };
    \end{axis}
\end{tikzpicture}
\label{fig:synthetic_loss_a}
}
\end{minipage}

\caption{Simulation result with synthetic data: overall task loss $\mathcal{L}_\text{total}$ under different $r_+^{(3,4)}$ (left) and different $a$ (right). The task losses for task-agnostic coding are too large and have to be put in a separate figure in the appendix.}
\label{fig:synthetic_evaluation}
\end{figure}
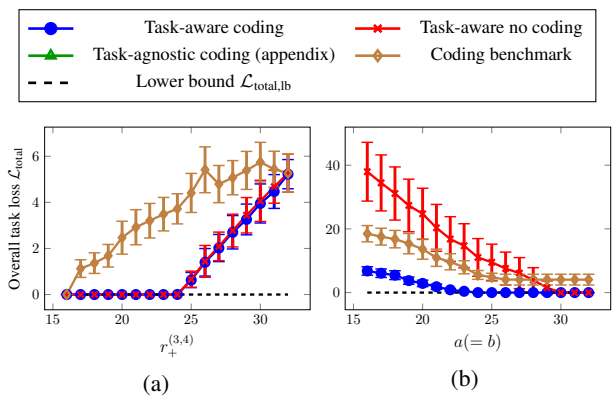

To show our algorithm converges to near-optimal solution for low-dimensional data and to verify our conclusions 
in the last section
numerically, we run simulation with synthetic data for our task-aware coding approach and compare against three benchmark approaches. The benchmark approaches are: 1) \textbf{Task-aware no coding} approach, where network coding at node 5 is not allowed, i.e., each dimension of $\phi^{(5,6)}$ can only be a dimension of $\phi^{(1,5)}$ or $\phi^{(2,5)}$; 2) \textbf{Task-agnostic coding} approach (used in \citet{liu2020neural}), where the objective is to minimize the reconstruction loss at node 3 and 4, i.e., $K^{(3)} = K^{(4)} = I$; 3) \textbf{Task-aware coding benchmark} (abbreviated as \textbf{coding benchmark}) approach, which is also a task-aware coding approach but the encoder parameters associated with edge $(5,6)$ is determined greedily first and then other parameters. Such greedy approach doesn't ensure global optimality but provides a general analytical solution (see appendix for further details).


The simulation results are shown in Fig. \ref{fig:synthetic_evaluation}. The parameters are as follows: we fix $n = 32$, $\Psi = I$, $a = b \geq 16$, $Z = 8$. Next we let eigen-values $\mu_1^{(3)}, \cdots, \mu_{2Z}^{(3)}$ and $\mu_1^{(4)}, \cdots, \mu_{2Z}^{(4)}$ be positive, and other eigen-values of $S^{(3)}$ and $S^{(4)}$ be 0.
Hence $\mathcal{L}_{\text{total,lb}} = 0$. Other training details are provided in the appendix.
In Fig. \ref{fig:synthetic_loss_r34+}, we fix $a=b=24$ and change eigen-vectors $u^{(3)}_{1}, \cdots, u^{(3)}_{2Z}$ and $u^{(4)}_{1}, \cdots, u^{(4)}_{2Z}$ to make $r^{(3,4)}_+$ different, while in the meantime keep Eq. (\ref{eq:sf_1}), (\ref{eq:sf_2}) and (\ref{eq:sf_3}). We can observe our task-aware coding approach achieves $\mathcal{L}_{\text{total,lb}}$ when $r^{(3,4)}_+ \leq 24$, i.e., Eq. (\ref{eq:nc_1}) is satisfied, which verifies our conclusion in Corollary \ref{thm:sf2}. We also notice that the task-aware no coding approach achieves $\mathcal{L}_{\text{total,lb}}$ when $r^{(3,4)}_+ \leq 24$ as well, since coding is not required to achieve $\mathcal{L}_{\text{total,lb}}$. 
In Fig. \ref{fig:synthetic_loss_a}, we fix $u^{(3)}_{1}, \cdots, u^{(3)}_{2Z}$ and $u^{(4)}_{1}, \cdots, u^{(4)}_{2Z}$ such that $r^{(3,4)}_+ = 18$, and change $a$, while in the meantime keep Eq. (\ref{eq:sf_1}) and (\ref{eq:sf_2}). We can observe our task-aware coding approach achieves $\mathcal{L}_{\text{total,lb}}$ when $a = b \geq 24$, i.e., Eq. (\ref{eq:sf_4}) is satisfied, which verifies our conclusion in Corollary \ref{thm:sf3}. Furthermore, in both Fig. \ref{fig:synthetic_loss_r34+} and \ref{fig:synthetic_loss_a}, our task-aware coding approach beats all the other benchmark approaches under varying $r^{(3,4)}_+$'s with respect to overall task loss $\mathcal{L}_{\text{total}}$.

In the next section we will evaluate our approach over four high-dimensional real-world datasets. 

\section{Evaluation}\label{sec:eval}
\begin{figure*}[t]
\centering



\subfigure[Observations at node 1/2 \& demands at node 3/4.]{
\includegraphics[scale=0.65]{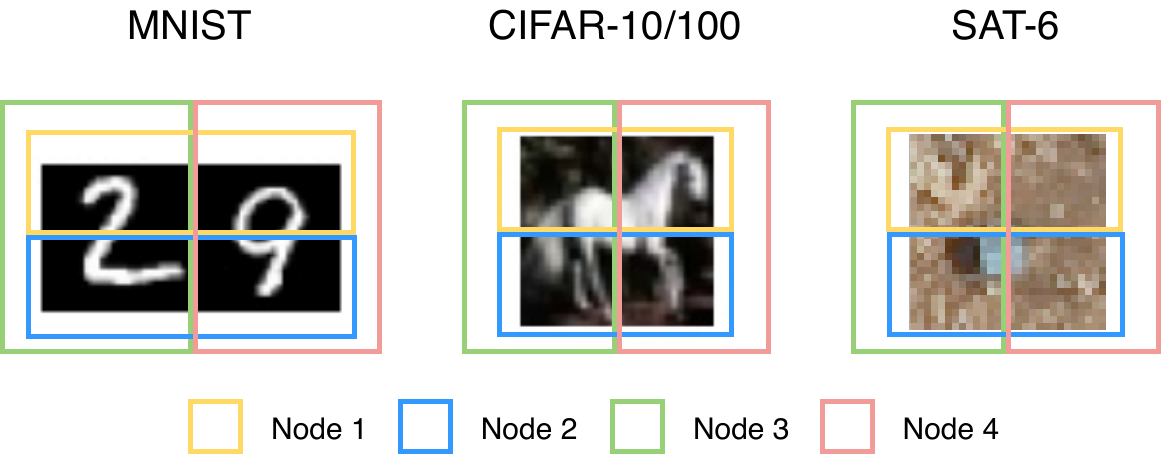}
\label{fig:eval_setup}
}
\subfigure[Utility comparison for MNIST.]{
\begin{tikzpicture}[scale=0.55]
    \begin{axis}[height=6cm, width=8cm,
                 ybar stacked,
                 symbolic x coords={Task-aware coding,
                                    Task-aware no coding,
                                    Task-agnostic coding,
                                    Coding benchmark,},
                 legend pos=outer north east,
                 legend entries = {{} {Overall task loss $\mathcal{L}_{\text{total}}$},
                                   {} {Utility of $\phi^{(5,6)}$},
                                   {} {Utility of $\phi^{(1,3)}$},
                                   {} {Utility of $\phi^{(2,4)}$}},
                 xticklabel style={rotate=12},
                 xtick=data,
                 ylabel={Loss/Utility},
                 xlabel style={font=\large},
                 ylabel style={font=\large},
                 scaled y ticks=false,
                 every axis plot/.append style={ultra thick},
                ]
    \addplot coordinates
        {
(Task-aware coding,27.697919935074044)
(Task-aware no coding,30.21998860678775)
(Task-agnostic coding,34.9290036944808)
(Coding benchmark,28.883398168538903)
        };
    \addplot coordinates
        {
(Task-aware coding,15.3754106800886)
(Task-aware no coding,13.709345788763983)
(Task-agnostic coding,16.32623981221446)
(Coding benchmark,19.446559777400118)
        };
      \addplot coordinates
        {
(Task-aware coding,8.766153624359461)
(Task-aware no coding,8.698099988462484)
(Task-agnostic coding,5.258560854175325)
(Coding benchmark,6.328948331269629)
        };
      \addplot coordinates
        {
(Task-aware coding,10.557859093099275)
(Task-aware no coding,9.769908948607164)
(Task-agnostic coding,5.883538971750799)
(Coding benchmark,7.7384370554127315)
        };
    \end{axis}
\end{tikzpicture}
\label{fig:mnist_loss_comparison}
}

\subfigure{
\begin{tikzpicture}[scale=0.5]
    \begin{customlegend}[
    legend entries={Task-aware coding,Task-aware no coding,Task-agnostic coding,Coding benchmark,Lower bound $\mathcal{L}_{\text{total,lb}}$},
    legend columns=5,
    legend style={
            /tikz/column 1/.style={
                column sep=5pt,
            },
            /tikz/column 2/.style={
                column sep=5pt,
                font=\scriptsize,
            },
            /tikz/column 3/.style={
                column sep=5pt,
            },
            /tikz/column 4/.style={
                column sep=5pt,
                font=\scriptsize,
            },
            /tikz/column 5/.style={
                column sep=5pt,
            },
            /tikz/column 6/.style={
                column sep=5pt,
                font=\scriptsize,
            },
            /tikz/column 7/.style={
                column sep=5pt,
            },
            /tikz/column 8/.style={
                column sep=5pt,
                font=\scriptsize,
            },
            /tikz/column 9/.style={
                column sep=5pt,
            },
            /tikz/column 10/.style={
                font=\scriptsize,
            },
        }]
    \addlegendimage{blue,mark=*,line width=1pt}
    \addlegendimage{red,mark=x,line width=1pt}
    \addlegendimage{green!60!black,mark=triangle,line width=1pt}
    \addlegendimage{brown,mark=diamond,line width=1pt}
    \addlegendimage{black,dashed, mark=none,line width=1pt}
    \end{customlegend}
\end{tikzpicture}
}\\
\addtocounter{subfigure}{-1}
\subfigure[Task loss, MNIST.]{
\begin{tikzpicture}[scale=0.55]
    \pgfplotsset{normalsize,samples=10}
    \begin{axis}[ height=6cm, width=8cm,
                  xtick distance=1,
                  xlabel={Edge capacity $Z$},
                  ylabel={Overall task loss $\mathcal{L}_{\text{total}}$},
                  xlabel style={font=\large},
                  ylabel style={font=\large},
                  scaled y ticks=false,
                  every axis plot/.append style={ultra thick},
                  mark size=3pt ]
            \addplot [blue,mark=*, error bars/.cd, y dir=both, y explicit,
                      error bar style={line width=2pt,solid},
                      error mark options={line width=1pt,mark size=4pt,rotate=90}]
                      table [x=x, y=y, y error=y-err]{%
                        x y y-err
1 55.39141340352539 0.33202762028089694
2 49.22432985096673 0.31115948766503404
3 44.97046357183471 0.29099564180669957
4 41.25007260826336 0.271000916587609
5 38.547063239500496 0.2672398290055057
6 36.06695440069438 0.25886055895681315
7 33.566568702929644 0.2509456261707066
8 31.36170147405591 0.23775558890146395
9 29.494767170088792 0.22621333647341269
10 27.69791993507404 0.21641226321816942
                      };
            \addplot [red,mark=x, error bars/.cd, y dir=both, y explicit,
                      error bar style={line width=2pt,solid},
                      error mark options={line width=1pt,mark size=4pt,rotate=90}]
                      table [x=x, y=y, y error=y-err]{%
                        x y y-err
1 55.29647523906259 0.3330357878716796
2 49.79136833161074 0.3217153726619483
3 45.83213691948825 0.3012487787578678
4 42.34773623115551 0.28389085735344605
5 39.7362289918937 0.2759520780645349
6 37.27372200317249 0.26957275554827104
7 35.433841989305726 0.25829202628233006
8 33.523513806284406 0.24888047305007938
9 31.723690588212573 0.24222450313551036
10 30.21998860678775 0.23226521358968877
                      };
            \addplot [green!60!black,mark=triangle, error bars/.cd, y dir=both, y explicit,
                      error bar style={line width=2pt,solid},
                      error mark options={line width=1pt,mark size=4pt,rotate=90}]
                      table [x=x, y=y, y error=y-err]{%
                        x y y-err
1 58.295224196336584 0.35332050489757083
2 54.748591897122935 0.3385160274063127
3 50.06340897840224 0.32881115913346576
4 47.28081136829607 0.30530707222853154
5 44.47622751224086 0.2939808071841702
6 42.1663398121924 0.289164430380261
7 40.111165416026324 0.28536894497674054
8 38.784705052885776 0.2801104728435548
9 36.65190408192145 0.26944517763107495
10 34.9290036944808 0.2612232761548838
                      };
            \addplot [brown,mark=diamond, error bars/.cd, y dir=both, y explicit,
                      error bar style={line width=2pt,solid},
                      error mark options={line width=1pt,mark size=4pt,rotate=90}]
                      table [x=x, y=y, y error=y-err]{%
                        x y y-err
1 55.410060542076174 0.33461809050819524
2 49.29517472452035 0.31957396288265894
3 45.92902739076235 0.29303488616050427
4 42.0911240301845 0.27782193741679384
5 39.446030487057804 0.2754469837061677
6 36.39940962810966 0.26190837108070314
7 34.645866314040894 0.25171774324382334
8 31.94894211060525 0.2364500159943698
9 30.5474251041847 0.22565938293532267
10 28.883398168538907 0.2159762622396713
                      };
            \addplot [black,dashed, mark=none, error bars/.cd, y dir=both, y explicit,
                      error bar style={line width=2pt,solid},
                      error mark options={line width=1pt,mark size=4pt,rotate=90}]
                      table [x=x, y=y]{%
                        x y
1 51.75517566750313 0
2 44.5907538664608 0
3 39.70291034926374 0
4 35.9204163418521 0
5 32.76757879558056 0
6 29.869012627356888 0
7 27.34203643259282 0
8 25.116145736264457 0
9 23.16722325991975 0
10 21.449470474082702 0
                      };
    \end{axis}
\end{tikzpicture}
\label{fig:mnist_loss_bottleneck}
}
\subfigure[Task loss, CIFAR-10.]{
\begin{tikzpicture}[scale=0.55]
    \pgfplotsset{normalsize,samples=10}
    \begin{axis}[ height=6cm, width=8cm,
                  xtick distance=3,
                  xlabel={Edge capacity $Z$},
                  xlabel style={font=\large},
                  ylabel style={font=\large},
                  scaled y ticks=false,
                  every axis plot/.append style={ultra thick},
                  mark size=3pt ]
            \addplot [blue,mark=*, error bars/.cd, y dir=both, y explicit,
                      error bar style={line width=2pt,solid},
                      error mark options={line width=1pt,mark size=4pt,rotate=90}]
                      table [x=x, y=y, y error=y-err]{%
                        x y y-err
3 72.71321517861047 0.7376326495461132
6 54.01205546054677 0.530648125847083
9 44.40110008799114 0.4441973167528023
12 38.24219118464862 0.3840795358803665
15 33.64989174098309 0.33986815442102336
18 30.258972485513695 0.30566759498160007
21 27.769647753133903 0.282338478512876
24 25.331620358633273 0.2577422368287276
27 23.428259085478942 0.23990496800833203
30 21.853545578979652 0.22437263026269522
                      };
            \addplot [red,mark=x, error bars/.cd, y dir=both, y explicit,
                      error bar style={line width=2pt,solid},
                      error mark options={line width=1pt,mark size=4pt,rotate=90}]
                      table [x=x, y=y, y error=y-err]{%
                        x y y-err
3 84.44633937117396 0.8430835484684671
6 69.16943815921675 0.685327742505536
9 61.28328924095693 0.6268668028992312
12 56.3884506466035 0.5852123397079898
15 53.01424733963739 0.559270172022624
18 50.52101156753098 0.5399202823013617
21 48.62831816694367 0.5261875168601299
24 46.92258996901775 0.5145296751329119
27 45.521313628253154 0.504791990349663
30 44.4377070977898 0.4979846425741489
                      };
            \addplot [green!60!black,mark=triangle, error bars/.cd, y dir=both, y explicit,
                      error bar style={line width=2pt,solid},
                      error mark options={line width=1pt,mark size=4pt,rotate=90}]
                      table [x=x, y=y, y error=y-err]{%
                        x y y-err
3 82.03903072768894 0.8578395540868841
6 61.964864560823955 0.6240392243227669
9 52.21408844788745 0.5280406212963982
12 45.29641722984952 0.45861645110114196
15 40.32791857172415 0.4051309894251167
18 36.592401071186735 0.36754020609942684
21 33.71281405946831 0.34054161195654636
24 31.24493514818392 0.3154517086446587
27 29.383090142984067 0.29726360823729153
30 27.615985513385223 0.28102797182097805
                      };
            \addplot [brown,mark=diamond, error bars/.cd, y dir=both, y explicit,
                      error bar style={line width=2pt,solid},
                      error mark options={line width=1pt,mark size=4pt,rotate=90}]
                      table [x=x, y=y, y error=y-err]{%
                        x y y-err
3 79.67238445588364 0.8235808742300994
6 58.70653405279929 0.5843203172338522
9 49.17087747799039 0.4981645986068072
12 41.816915906892255 0.42034671868156254
15 36.86040578804735 0.3721553910853388
18 33.42799234683219 0.33837890786284186
21 30.467141467478488 0.30915091875880923
24 27.925306035535172 0.28393023965697006
27 26.196223067652223 0.26750087224862773
30 24.375456475810836 0.24951751713016507
                      };
            \addplot [black,dashed, mark=none, error bars/.cd, y dir=both, y explicit,
                      error bar style={line width=2pt,solid},
                      error mark options={line width=1pt,mark size=4pt,rotate=90}]
                      table [x=x, y=y]{%
                        x y
3 67.3497119794032 0
6 48.26577995707447 0
9 39.08305483730237 0
12 32.733900925182525 0
15 28.365531467607042 0
18 25.13417121774455 0
21 22.55002833062153 0
24 20.420239900460246 0
27 18.66520696151906 0
30 17.14243325251809 0
                      };
    \end{axis}
\end{tikzpicture}
\label{fig:cifar_loss_bottleneck}
}
\subfigure[Task loss, CIFAR-100.]{
\begin{tikzpicture}[scale=0.55]
    \pgfplotsset{normalsize,samples=10}
    \begin{axis}[ height=6cm, width=8cm,
                  xtick distance=3,
                  xlabel={Edge capacity $Z$},
                  xlabel style={font=\large},
                  ylabel style={font=\large},
                  scaled y ticks=false,
                  every axis plot/.append style={ultra thick},
                  mark size=3pt ]
            \addplot [blue,mark=*, error bars/.cd, y dir=both, y explicit,
                      error bar style={line width=2pt,solid},
                      error mark options={line width=1pt,mark size=4pt,rotate=90}]
                      table [x=x, y=y, y error=y-err]{%
                        x y y-err
3 75.72532026391171 0.8006212075735749
6 56.25459992009163 0.5991920682626052
9 45.65566082802817 0.4938472354800813
12 39.23819902874307 0.42846871881348514
15 34.809498017049 0.3849390344778091
18 31.395059429965897 0.3508763766282604
21 28.867329899197816 0.32464744509042404
24 26.440276865583385 0.29921987126304433
27 24.338672096552745 0.2779348940948749
30 22.820227130789963 0.26328215174288755
                      };
            \addplot [red,mark=x, error bars/.cd, y dir=both, y explicit,
                      error bar style={line width=2pt,solid},
                      error mark options={line width=1pt,mark size=4pt,rotate=90}]
                      table [x=x, y=y, y error=y-err]{%
                        x y y-err
3 87.7780954121265 0.896843046407539
6 71.39408856672749 0.7507414865709795
9 62.649730824265504 0.6742116424824344
12 57.60259079943213 0.6288580640252885
15 54.40773184882551 0.6016557127274588
18 51.7175068754367 0.582612903686134
21 49.53091333851989 0.565791589954737
24 47.73838538059628 0.5518406561212041
27 46.37492673104992 0.5407555919722689
30 45.130035099177846 0.5314908351403023
                      };
            \addplot [green!60!black,mark=triangle, error bars/.cd, y dir=both, y explicit,
                      error bar style={line width=2pt,solid},
                      error mark options={line width=1pt,mark size=4pt,rotate=90}]
                      table [x=x, y=y, y error=y-err]{%
                        x y y-err
3 84.4876040594846 0.8853615030011157
6 64.23404765771383 0.6827856081596755
9 53.51667802229789 0.5758100108804233
12 46.20193886163251 0.5004388459880631
15 41.38561348107302 0.4501669975560336
18 37.83327479383213 0.4161260502534695
21 34.96357866043998 0.38706957502243355
24 32.45021118990377 0.3616738017257644
27 30.39536941041916 0.34077220628265076
30 28.610642474839814 0.3229563143996316
                      };
            \addplot [brown,mark=diamond, error bars/.cd, y dir=both, y explicit,
                      error bar style={line width=2pt,solid},
                      error mark options={line width=1pt,mark size=4pt,rotate=90}]
                      table [x=x, y=y, y error=y-err]{%
                        x y y-err
3 82.83456474580701 0.8598084605271638
6 60.936674310605916 0.6336884806953504
9 50.41213634896661 0.5365144118070174
12 43.00652635423813 0.4642087727409736
15 38.17978994534919 0.41719524526643026
18 34.66151418002619 0.3818362810993875
21 32.110476827993 0.3563587135696677
24 29.419992322822928 0.3289314488915918
27 27.35219734696731 0.30827188805671185
30 25.461886865297746 0.2887161017077619
                      };
            \addplot [black,dashed, mark=none, error bars/.cd, y dir=both, y explicit,
                      error bar style={line width=2pt,solid},
                      error mark options={line width=1pt,mark size=4pt,rotate=90}]
                      table [x=x, y=y]{%
                        x y
3 70.34347668724897 0
6 50.627584673493175 0
9 40.20182372436588 0
12 33.86966695181249 0
15 29.587989300353087 0
18 26.328397713218408 0
21 23.661529202378375 0
24 21.48763462581752 0
27 19.613388657406546 0
30 18.048006184246987 0

                      };
    \end{axis}
\end{tikzpicture}
\label{fig:cifar_100_loss_bottleneck}
}
\subfigure[Task loss, SAT-6.]{
\begin{tikzpicture}[scale=0.55]
    \pgfplotsset{normalsize,samples=10}
    \begin{axis}[ height=6cm, width=8cm,
                  xtick distance=3,
                  xlabel={Edge capacity $Z$},
                  xlabel style={font=\large},
                  ylabel style={font=\large},
                  scaled y ticks=false,
                  every axis plot/.append style={ultra thick},
                  mark size=3pt ]
            \addplot [blue,mark=*, error bars/.cd, y dir=both, y explicit,
                      error bar style={line width=2pt,solid},
                      error mark options={line width=1pt,mark size=4pt,rotate=90}]
                      table [x=x, y=y, y error=y-err]{%
                        x y y-err
3 19.085267376902728 0.11175671271055022
6 16.206330909594058 0.09174971117758754
9 14.453680315961766 0.07999596013568949
12 13.178612710663934 0.07315495424006402
15 12.184924728162258 0.06704972434432223
18 11.381979619565923 0.06241780611026324
21 10.720772925858183 0.058571391421939706
24 10.160169133001943 0.05546785381389338
27 9.651565105712864 0.052572978074195136
30 9.22245983774499 0.05020198840034211
                      };
            \addplot [red,mark=x, error bars/.cd, y dir=both, y explicit,
                      error bar style={line width=2pt,solid},
                      error mark options={line width=1pt,mark size=4pt,rotate=90}]
                      table [x=x, y=y, y error=y-err]{%
                        x y y-err
3 20.103230601267256 0.12229015514528245
6 17.487684124127757 0.10347882791018333
9 16.051253345869448 0.0944188568220715
12 15.005554255398108 0.0886422540808796
15 14.21538173990757 0.08408344167180586
18 13.572846353513365 0.0803639878852653
21 13.0416902549548 0.07743728263835796
24 12.5880159203934 0.0749148087141357
27 12.185597647237994 0.07291613764978719
30 11.841439830068278 0.07112795338470157
                      };
            \addplot [green!60!black,mark=triangle, error bars/.cd, y dir=both, y explicit,
                      error bar style={line width=2pt,solid},
                      error mark options={line width=1pt,mark size=4pt,rotate=90}]
                      table [x=x, y=y, y error=y-err]{%
                        x y y-err
3 20.651009272988187 0.12007606752263388
6 17.634190307437564 0.10164687404768745
9 15.989032882665661 0.08989229817736034
12 14.776631207516287 0.08250636711062033
15 13.730257983297482 0.0765814103788548
18 12.979836835157476 0.07186478479914046
21 12.310269939316106 0.06785574112472381
24 11.758653620575943 0.06449787210205121
27 11.235365252838344 0.06149450626186008
30 10.79803536509124 0.05901340116267271
                      };
            \addplot [brown,mark=diamond, error bars/.cd, y dir=both, y explicit,
                      error bar style={line width=2pt,solid},
                      error mark options={line width=1pt,mark size=4pt,rotate=90}]
                      table [x=x, y=y, y error=y-err]{%
                        x y y-err
3 19.526060821351592 0.11129865747951942
6 16.773167019183457 0.09496469824195497
9 14.967432639091646 0.08313457400179557
12 13.758784588438102 0.07552604825581623
15 12.72605987597268 0.07062927312129319
18 11.855401710762813 0.06529331369683038
21 11.248827703839634 0.0617264205989311
24 10.684410607072477 0.05843434612982687
27 10.156101731763744 0.05547270217974485
30 9.721446037819453 0.05306825142857077
                      };
            \addplot [black,dashed, mark=none, error bars/.cd, y dir=both, y explicit,
                      error bar style={line width=2pt,solid},
                      error mark options={line width=1pt,mark size=4pt,rotate=90}]
                      table [x=x, y=y]{%
                        x y
3 18.11528773548548 0
6 15.044443375155907 0
9 13.266905134593316 0
12 11.978194228193729 0
15 10.978177609148219 0
18 10.15355464497057 0
21 9.465813813789197 0
24 8.876510601903448 0
27 8.356580935863974 0
30 7.903620875660607 0
                      };
    \end{axis}
\end{tikzpicture}
\label{fig:sat_loss_bottleneck}
}

\caption{
Evaluation setup and result with MNIST, CIFAR-10, CIFAR-100 and SAT-6 dataset. 
}
\label{fig:full_evaluation}
\end{figure*}

Our evaluation compares the performance of our task-ware coding approach and other benchmark approaches 
(as in the last section)
over a few standard ML datasets, including MNIST \citep{lecun1998gradient}, CIFAR-10, CIFAR-100 \citep{krizhevsky2009learning} and SAT-6 \citep{basu2015deepsat}. For MNIST, each data sample is a $28 \times 28$ handwritten digit image, and we let $x$ be a horizontally-concatenated image ($28 \times 56$) of two images.
Node 1 and 2 observe the upper and the lower half part of the concatenated image (both $14 \times 56$) respectively.
Task matrices $K^{(3)}$ and $K^{(4)}$ are formulated as follows: we pre-train a convolutional neural network (CNN) to classify original MNIST digits by their labels. Task matrix $K^{(3)}$ requires both the reconstruction of the feature map (i.e., the output of the first layer of CNN) of the left MNIST digit in the concatenated image, and the reconstruction of the concatenated image itself. Mathematically, we have
\begin{align}
K^{(3)} = [\underbrace{\gamma \tilde{K}^{(3)\top}}_{\text{recon. of left feature map}}, \underbrace{(1-\gamma)I}_{\text{recon. of conctenated image}}]^\top,
\end{align}
where $\tilde{K}^{(3)}$ represents the mapping between $x$ and the feature map of the left MNIST digit, and $\gamma$ is a weight coefficient. Here we use $\gamma = 0.9$. Task matrix $K^{(4)}$ is formulated similarly while the feature map of the right MNIST digit is considered instead. For CIFAR-10/CIFAR-100/SAT-6, each data sample is a $32 \times 32$ or $28 \times 28$ colored image with 3 or 4 channels and we let $x$ represent the original image. We similarly let node 1 and 2 observe the upper and the lower half part of the image respectively, and let node 3 and node 4 require the reconstruction of the left and the right half part respectively. The setup is illustrated in Fig. \ref{fig:eval_setup}. Other training details are provided in the appendix.


The evaluation result is shown in Fig. \ref{fig:full_evaluation}. In Fig. \ref{fig:mnist_loss_bottleneck}-\ref{fig:sat_loss_bottleneck}, we plot the overall task loss $\mathcal{L}_{\text{total}}$ under different edge capacity $Z$. 
In these figures, we see task-aware coding and coding benchmark approach outperform task-aware no coding and task-agnostic coding approach, and the overall task loss $\mathcal{L}_{\text{total}}$ of our task-aware coding approach is the closet to $\mathcal{L}_{\text{total, lb}}$. The maximum improvements of overall task loss $\mathcal{L}_{\text{total}}$ for task-aware coding approach are 26.1\%, 26.4\%, 25.3\% and 17.1\% respectively, compared to task-agnostic coding approach; and are 9.1\%, 103.3\%, 97.8\% and 28.4\% respectively, compared to task-aware no coding approach.
We also notice that, task-agnostic coding approach doesn't always outperform task-aware no coding approach, and vice versa. 
Therefore, it is beneficial to combine network coding and task-awareness.

In Fig. \ref{fig:mnist_loss_comparison}, we compare the utilities of $\phi^{(5,6)}, \phi^{(1,3)}$ and $\phi^{(2,4)}$ in terms of minimizing the overall task loss $\mathcal{L}_{\text{total}}$ when $Z = 10$, $\gamma = 0.9$. The three utilities are defined in a way such that their sums plus $\mathcal{L}_{\text{total}}$ is a fixed number (see appendix for the formal definition). 
We observe that the coding benchmark approach outperforms other approaches with respect to the utility of $\phi^{(5,6)}$, but underperforms our task-aware coding approach by 4.2\% with respect to the overall task loss $\mathcal{L}_{\text{total}}$. This is because coding benchmark approach greedily determines the encoder parameters associated with edge $(5, 6)$ first which however could not guarantee optimality. On the other hand, our task-aware coding approach tunes all the encoding and decoding parameters jointly and achieves a lower $\mathcal{L}_{\text{total}}$.

\textbf{Limitation.} Our task-aware network coding problem defined in Eq. (\ref{eq:prob}) is non-convex, and hence the adopted gradient descent method may converge to local optimum. 

\section{Conclusion}\label{sec:conclusion}
This paper considers task-aware network coding over butterfly network in real-coordinate space. We prove a lower bound $\mathcal{L}_{\text{total,lb}}$ of the total loss, as well as conditions for achieving $\mathcal{L}_{\text{total,lb}}$. We also provide a machine learning algorithm in the general settings. Experimental results demonstrate that our task-aware coding approach outperforms the benchmark approaches under various settings.



Regarding future extension, although butterfly network is a representative topology in network coding, it is worthwhile to extend the analysis of the task-aware network coding problem to general networks. A similar $\mathcal{L}_{\text{total,lb}}$ can still be derived, yet the associated necessary condition and sufficient conditions for achieving $\mathcal{L}_{\text{total,lb}}$ depend on the specific network topology in a manner that needs further work to be fully understood.


\bibliography{ref}

\newpage
\section{Appendix}
\subsection{Task-aware PCA Derivation} \label{subsec:pca_derivation}

We define random variable $h = L^{-1} x \in \reals^n$ and $\hat{h} = L^{-1} \hat{x} \in \reals^n$. Thus $\E_h [hh^\top] = \E_x [L^{-1} xx^\top L^{-\top}] = I$. And we also define $D_h = L^{-1} D$ and $E_h = E L$.

Then we have
\begin{align}
\mathcal{L} 
= & \E_x [\|K (x - \hat{x}) \|_2^2] = \E_h [\|KL(h-\hat{h})\|_2^2]\notag\\
= & \E_h [\|KL (I - D_h E_h) h \|_2^2]\notag\\
= & \E_h [\tr(KL (I - D_h E_h) hh^\top (I - D_h E_h)^\top L^\top K^\top)]\notag\\
= & \tr(KL (I - D_h E_h) (I - D_h E_h)^\top L^\top K^\top)\label{eq:pca_loss_1}
\end{align}
where $\tr(\cdot)$ denotes the trace of a matrix. We can verify that $D_h = E_h^\top (E_h E_h^\top)^{-1}$ is a zero point of
\begin{align}
\frac{\nabla \mathcal{L}}{\nabla D_h} = 2 L^\top K^\top K L (D_h E_h - I) E_h^\top.
\end{align}
So we plug $D_h = E_h (E_h E_h^\top)^{-1}$ into Eq. \ref{eq:pca_loss_1} and get
\begin{align}
\mathcal{L} = \tr(L^\top K^\top K L) - \tr(L^\top K^\top K L E_h^\top (E_h E_h^\top)^{-1} E_h).\notag
\end{align}
Therefore, we should find $E_h$ that maximizes $\tr(L^\top K^\top K L E_h^\top (E_h E_h^\top)^{-1} E_h)$. Classical PCA dictates that $\tr(L^\top K^\top K L E_h^\top (E_h E_h^\top)^{-1} E_h) \leq \sum_{i=1}^Z \mu_i$, i.e., $\mathcal{L} \geq \sum_{i=Z+1}^n \mu_i$. And if the eigen-gap $\mu_{Z} - \mu_{Z+1} > 0$, the equality holds if and only if $\cl(E_h^\top) = \sn(\{u_1, u_2, \cdots, u_Z\})$, which is equivalent to $\cl(E^\top) = \sn(\{L^{-\top} u_{1}, L^{-\top} u_{2}, \cdots, L^{-\top} u_{Z}\})$.

\subsection{Proof of Theorem \ref{thm:sf0} when $2Z > n$} \label{subsec:proof_thm_sf0}

\begin{proof}
When $2Z > n$, we have $r^{(3,4)}_+ = r^{(3,4)}_- = n$, and $r_-^{(1,3)} = a \geq n - Z$, $r_-^{(2,4)} = b \geq n - Z$. Therefore we aim to make $\cl([\Phi^{(1,3)}, \Phi^{(5,6)}]) = \cl([\Phi^{(2,4)}, \Phi^{(5,6)}]) = \reals^n$. Since butterfly network is symmetric, without loss of generality, we assume $a \leq b$. And we denote the $i$-th column vector of $L$ by $l_i$.

We are able to find $\Phi^{(1,3)}$, $\Phi^{(2,4)}$ and $\Phi^{(5,6)}$ such that $\mathcal{L}_{\text{total,lb}}$ is achievable, through the following three steps:

$\bullet$ i) Let the first $n-b$ column vectors of $\Phi^{(5,6)}$ be $l_1 + l_{a+1}, l_2 + l_{a+2}, \cdots, l_{n-b} + l_{a+n-b}$, and the next $b-a$ column vectors of $\Phi^{(5,6)}$ be $l_{a+n-b+1}, l_{a+n-b+2}, \cdots, l_{n}$; 

$\bullet$ ii) Let the first $n-b$ column vectors of $\Phi^{(1,3)}$ and $\Phi^{(2,4)}$ be $l_1, l_2, \cdots, l_{n-b}$ and $l_{a+1}, l_{a+2}, \cdots, l_{a+n-b}$ respectively; 

$\bullet$ iii) Since $2Z > n$, we have $Z-n+a + Z-n+b > a + b - n$, i.e., the number of non-determined column vectors of $\Phi^{(5,6)}$ and $\Phi^{(1,3)}$ together is greater than the number of remaining vectors $l_{n-b+1}, l_{n-b+2}, \cdots, l_{a}$. So we can assign $l_{n-b+1}, l_{n-b+2}, \cdots, l_{a}$ to the non-determined column vectors of $\Phi^{(5,6)}$ and $\Phi^{(1,3)}$ such that each vector among $l_{n-b+1}, l_{n-b+2}, \cdots, l_{a}$ is assigned at least once. In the end we assign the last $Z-n+b$ column vectors of $\Phi^{(1,3)}$ to the last $Z-n+b$ column vectors of $\Phi^{(2,4)}$.

It can be verified that $\cl([\Phi^{(1,3)}, \Phi^{(5,6)}]) = \cl([\Phi^{(2,4)}, \Phi^{(5,6)}]) = \cl(L) = \reals^n$. Therefore, $\mathcal{L}_{\text{total,lb}}$ is achievable.
\end{proof}

\subsection{Further explanation of the task-aware coding benchmark approach} \label{subsec:explanation_task_aw}

In the coding benchmark approach, our first step is to let the column vectors of $\Phi^{(5,6)}$ be the top-$Z$ normalized eigen-vectors of $S^{(3)} + S^{(4)}$. This greedy step ensures $\mathcal{L}_{\text{total}}$ is minimized when node 3 and 4 receive $\phi^{(5,6)}$ only. 

Next, to determine optimal $\Phi^{(1,3)}$ which further minimizes $\mathcal{L}_{\text{total}}$ as much as possible, we consider the following problem:
\begin{align}
\max_{\tilde{\Phi}^{(1,3)} \in \reals^{n \times \min\{Z, c\}}} \quad & \tilde{\Phi}^{(1,3)\top} S^{(3)} \tilde{\Phi}^{(1,3)}\\
\text{s.t.} \quad & \cl(\tilde{\Phi}^{(1,3)}) \subseteq V,\\
&\tilde{\Phi}^{(1,3)\top}\tilde{\Phi}^{(1,3)} = I,
\end{align}
where $V = \{v \in \cl([\Phi^{(5,6)}, U^{(1)}])| v \perp w, \forall w \in \cl(\Phi^{(5,6)})\}$, and $c = \dm(\cl([\Phi^{(5,6)}, U^{(1)}]))-Z$ is the dimension of $V$. The following $\tilde{\Phi}^{(1,3)}$ is an optimal solution for the considered problem: we formulate a matrix $W \in \reals^{n \times c}$ whose column vectors form an orthogonal basis of $V$, and then let the column vectors of $\tilde{\Phi}^{(1,3)}$ be $W$ times the top-$\min\{Z, c\}$ normalized eigen-vectors of $W^\top S^{(3)} W$. For fixed $\Phi^{(5,6)}$, we know the optimal $\Phi^{(1,3)}$ that minimizes $\mathcal{L}_{\text{total}}$ should satisfy $\cl([\Phi^{(1,3)}, \Phi^{(5,6)}]) = \cl([\tilde{\Phi}^{(1,3)}, \Phi^{(5,6)}])$. Yet in general $\tilde{\Phi}^{(1,3)} \nsubseteq U^{(1)}$, so we cannot assign the column vectors of $\tilde{\Phi}^{(1,3)}$ to the column vectors of $\Phi^{(1,3)}$ directly. Therefore, we let the first $\min\{Z, c\}$ column vectors of $\Phi^{(1,3)}$ be the vectors in $\cl(U^{(1)})$ that extend a basis of $\cl(\Phi^{(5,6)})$ to a basis of $\cl([\tilde{\Phi}^{(1,3)}, \Phi^{(5,6)}])$. Other non-determined column vectors of $\Phi^{(1,3)}$, if any, can be arbitrary vectors in $\cl(U^{(1)})$.

We also use the similar idea to determine the column vectors of $\Phi^{(2,4)}$.

\subsection{Training Details} \label{subsec:training_details}


\begin{table*}[ht]
\caption{Training Details}
\label{tab:train_detail}
\begin{center}
\begin{tabular}{lccccr}
\toprule
Application & Num of Samples (Train/Test) & Num of Epochs & Batch Size & Learning Rate & Runtime\\
\midrule
Simulation & 64/64 & 2000 & 64& $0.05$ & $<24$ hrs\\
MNIST & 30000/5000 & 20 & 64& $5\times10^{-3}$ & $<24$ hrs\\
CIFAR-10 & 50000/10000 & 20 & 64& $5\times10^{-3}$ & $<24$ hrs\\
CIFAR-100 & 50000/10000 & 20 & 64& $5\times10^{-3}$ & $<24$ hrs\\
SAT-6 & 324000/81000 & 5 & 64& $5\times10^{-3}$ & $<72$ hrs\\
\bottomrule
\end{tabular}
\end{center}
\end{table*}

Both our simulation and our evaluation run on a personal laptop with 2.7 GHz Intel Core I5 processor and 8-GB 1867 MHz DDR3 memory. Our code is based on Pytorch and the Adam optimizer is used. The number of samples (train/test), number of epochs, batch size, learning rate and corresponding runtime (for the whole experiment) are summarized in Table \ref{tab:train_detail}. Note that our simulation is based on synthetic data, so we only need a training/testing dataset that satisfies $\E_x [x] = \bm{0}, \Psi= I$ and hence $2n = 64$ samples are enough. The publicly-available MNIST, CIFAR-10, CIFAR-100 and SAT-6 datasets do not have a stated license online.


Moreover, our evaluation for MNIST also requires a pre-trained CNN classifier for $28\times28$ images, which is composed of two consecutive convolution layers and a final linear layer. The number of input channels, the number of output channels, kernel size, stride and padding for two convolution layers are 1, 16, 5, 1, 2 and 16, 32, 5, 1, 2 respectively, and ReLU activation and max pooling with kernel size 2 are used after each convolution layer. The final linear layer has input size 1568 and output size 1.



We also provide our code in the supplementary material and will make it publicly-available after the review process.

\subsection{Overall Task Losses of Task-agnostic Coding Approach for Simulation} \label{subsec:simulation_task_ag}

\begin{figure}[t]
\centering

\begin{tikzpicture}[scale=0.5]
    \begin{customlegend}[
    legend entries={Task-aware coding,Task-agnostic coding},
    legend columns=2,
    legend style={
            /tikz/column 1/.style={
                column sep=5pt,
            },
            /tikz/column 2/.style={
                column sep=5pt,
                font=\scriptsize,
            },
            /tikz/column 3/.style={
                column sep=5pt,
            },
            /tikz/column 4/.style={
                column sep=5pt,
                font=\scriptsize,
            },
        }]
    \addlegendimage{blue,mark=*,line width=1pt}
    \addlegendimage{green!60!black,mark=triangle,line width=1pt}
    \end{customlegend}
\end{tikzpicture}

\begin{minipage}[t]{0.49\linewidth}
\strut\vspace*{-\baselineskip}\newline
\subfigure[]{
\begin{tikzpicture}[scale=0.53]
    \pgfplotsset{normalsize,samples=10}
    \begin{axis}[ height=6cm, width=8cm,
                  xlabel={$r^{(3,4)}_+$},
                  ylabel={Overall task loss $\mathcal{L}_{\text{total}}$},
                  xlabel style={font=\Large},
                  ylabel style={font=\large},
                  scaled y ticks=false,
                  every axis plot/.append style={ultra thick},
                  mark size=3pt ]
            \addplot [blue,mark=*, error bars/.cd, y dir=both, y explicit,
                      error bar style={line width=2pt,solid},
                      error mark options={line width=1pt,mark size=4pt,rotate=90}]
                      table [x=x, y=y, y error=y-err]{%
                        x y y-err
16 6.911812548195489e-07 3.073093965247263e-07
17 5.066640246184788e-06 4.731035325567704e-06
18 1.4494722576517923e-08 8.568561623689385e-09
19 3.7463711480241007e-07 5.220257716285899e-07
20 2.449196049536813e-07 3.004506629594412e-07
21 1.040607480574976e-08 5.747604562830079e-09
22 4.4599350823704945e-08 2.4596522828269098e-08
23 2.4981650065514156e-07 2.1498828483884266e-07
24 0.003985947765315031 0.0018782044144822862
25 0.6287877687241475 0.3296177718246637
26 1.3985435289708184 0.5970179153308545
27 2.0173303157938416 0.5896322900229524
28 2.7016288557100814 0.7252355307733556
29 3.2574517610267923 0.666716962805083
30 3.950915976422186 0.8473061459132485
31 4.470203985259172 0.7335067487170107
32 5.222534629085299 0.6334240631758826
                      };
            \addplot [green!60!black,mark=triangle, error bars/.cd, y dir=both, y explicit,
                      error bar style={line width=2pt,solid},
                      error mark options={line width=1pt,mark size=4pt,rotate=90}]
                      table [x=x, y=y, y error=y-err]{%
                        x y y-err
16 172.8375987897491 47.724847121914344
17 94.26239487799364 19.671373634057638
18 174.18427743692456 28.564440985114015
19 153.83579627635987 44.25555735925807
20 130.1169557970644 22.319986174234273
21 158.7624182280638 34.454522365565204
22 149.8457812863585 32.44706573459772
23 156.46833736789608 31.478330604830283
24 148.00023160061866 28.802687306532757
25 121.31254481189269 19.927416701892422
26 154.0859798561758 26.848979893047566
27 130.38710495072678 23.84811046094067
28 168.49931352001445 31.56676558030003
29 163.96303725166513 42.40842934528732
30 148.68978175424812 30.336870944823954
31 156.41032519017318 27.607111116237732
32 103.32288648321963 17.970249843232764
                      };
            \addplot [black,dashed, mark=none, error bars/.cd, y dir=both, y explicit,
                      error bar style={line width=2pt,solid},
                      error mark options={line width=1pt,mark size=4pt,rotate=90}]
                      table [x=x, y=y]{%
                        x y
16 0
32 0
                      };
    \end{axis}
\end{tikzpicture}
\label{fig:synthetic_loss_r34+_appen}
}
\end{minipage}
\begin{minipage}[t]{0.49\linewidth}
\strut\vspace*{-\baselineskip}\newline
\subfigure[]{
\begin{tikzpicture}[scale=0.53]
    \pgfplotsset{normalsize,samples=10}
    \begin{axis}[ height=6cm, width=8cm,
                  xlabel={$a (=b)$},
                  xlabel style={font=\Large},
                  ylabel style={font=\large},
                  scaled y ticks=false,
                  every axis plot/.append style={ultra thick},
                  mark size=3pt ]
            \addplot [blue,mark=*, error bars/.cd, y dir=both, y explicit,
                      error bar style={line width=2pt,solid},
                      error mark options={line width=1pt,mark size=4pt,rotate=90}]
                      table [x=x, y=y, y error=y-err]{%
                        x y y-err
16 6.7887583596120065 1.3249825957190458
17 6.081541798097357 1.329619261431759
18 5.471139056021478 1.389162055664233
19 3.7577250566057714 1.1256888558544584
20 2.884418669358931 1.0126211404453778
21 1.8355849644996638 0.8183213708915495
22 0.8821168576103318 0.3862255576440212
23 0.4083817648606587 0.2461480301702073
24 1.067947262089434e-11 2.5716226398142827e-12
25 9.46066349056175e-12 1.6935688272174662e-12
26 1.1577685293234012e-11 2.4185845724366454e-12
27 7.709805233056714e-08 1.0003009778788454e-07
28 1.417006439065208e-11 3.3575133693265813e-12
29 1.3213822034014166e-11 7.773963355986723e-12
30 7.817713672374309e-12 1.7577774127902056e-12
31 9.137685871359337e-12 2.6289755484753194e-12
32 9.081275297734164e-12 1.733473704279905e-12
                      };
            \addplot [green!60!black,mark=triangle, error bars/.cd, y dir=both, y explicit,
                      error bar style={line width=2pt,solid},
                      error mark options={line width=1pt,mark size=4pt,rotate=90}]
                      table [x=x, y=y, y error=y-err]{%
                        x y y-err
16 180.94475147145315 32.67183961020032
17 253.3906242465415 88.10778236802467
18 326.2451391348999 206.64532719414515
19 192.4431457469351 65.41601142742745
20 259.0869113563947 66.85752102100702
21 221.89017561068948 74.34429999873869
22 380.6332677204433 225.33149433141952
23 293.023320169583 97.74720100285599
24 303.7692319400638 107.56738477163343
25 235.07202106690488 63.49304675012292
26 307.9263476944641 101.79277593432704
27 295.70907658554927 118.40815984547349
28 267.469221296904 76.5472062421604
29 319.0429011516379 165.94029539651308
30 322.08980821016564 121.05731711122395
31 324.983549714093 132.45662438199093
32 337.51753378774777 172.94070340715322
                      };
            \addplot [black,dashed, mark=none, error bars/.cd, y dir=both, y explicit,
                      error bar style={line width=2pt,solid},
                      error mark options={line width=1pt,mark size=4pt,rotate=90}]
                      table [x=x, y=y]{%
                        x y
16 0
32 0
                      };
    \end{axis}
\end{tikzpicture}
\label{fig:synthetic_loss_a_appen}
}
\end{minipage}

\caption{Simulation result with synthetic data of task-agnostic coding approach: overall task loss $\mathcal{L}_\text{total}$ under different $r_+^{(3,4)}$ (left) and different $a$ (right). }
\label{fig:synthetic_evaluation_appen}
\end{figure}
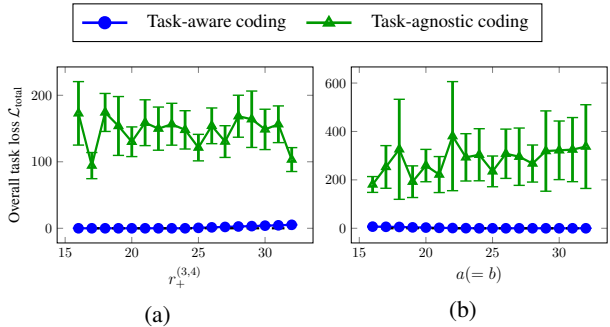

Fig. \ref{fig:synthetic_evaluation_appen} shows the overall task losses of task-agnostic coding approach for the two simulations. For the purpose of comparison we also plot the overall task losses of our task-aware coding approach (same values as in Fig. \ref{fig:synthetic_evaluation}). It clearly illustrates that task-agnostic coding approach performs more poorly than the task-aware coding approach.

\subsection{Definitions of the Utilities in Evaluation} \label{subsec:mnist_utility}

We first define the utility of $\phi^{(5,6)}$. We can find normalized vectors $\xi_1, \cdots, \xi_d$ that form an \textit{orthogonal} basis of $\cl(\Phi^{(5,6)})$, where $d = \dm(\cl(\Phi^{(5,6)}))$. Then the utility of $\phi^{(5,6)}$ is defined as $\sum_{j=1}^d \xi_j^\top (S^{(3)} + S^{(4)}) \xi_j$, which is essentially the minimum achievable overall task loss difference between the case when node 3 and 4 receive nothing and when they receive $\phi^{(5,6)}$ only. For task-aware coding benchmark approach, according to \ref{subsec:explanation_task_aw}, the utility of $\phi^{(5,6)}$ equals the sum of the top-$Z$ eigen-values of $S^{(3)} + S^{(4)}$.

We next define the utility of $\phi^{(1,3)}$. We can find normalized vectors $\chi_1, \cdots, \chi_e$ that extend $\xi_1, \cdots, \xi_d$ to an \textit{orthogonal} basis of $\cl([\Phi^{(1,3)}, \Phi^{(5,6)}])$, where $e = \dm(\cl([\Phi^{(1,3)}, \Phi^{(5,6)}])) - d$. Then the utility of $\phi^{(1,3)}$ is defined as $\sum_{j=1}^e \chi_j^\top S^{(3)} \chi_j$, which is essentially the minimum achievable overall task loss difference between the case when node 3 receives $\phi^{(5,6)}$ only and when it receives both $\phi^{(1,3)}$ and $\phi^{(5,6)}$.

The utility of $\phi^{(2,4)}$ is defined similarly as $\phi^{(1,3)}$.

\subsection{Comparison of the Reconstructed Images in Evaluation} \label{subsec:mnist_visualization}

\begin{figure}[t]
\centering

\subfigure[Original images.]{
\includegraphics[scale=0.4]{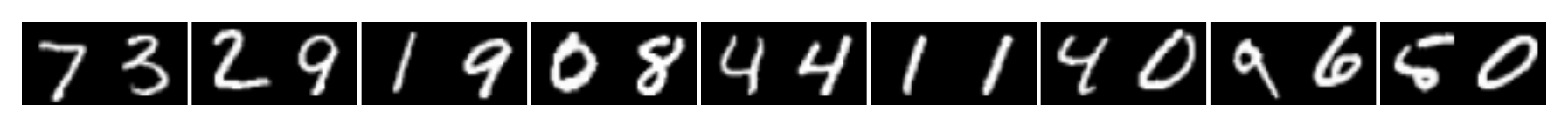}
}
\subfigure[Reconstructed images, task-aware coding.]{
\includegraphics[scale=0.4]{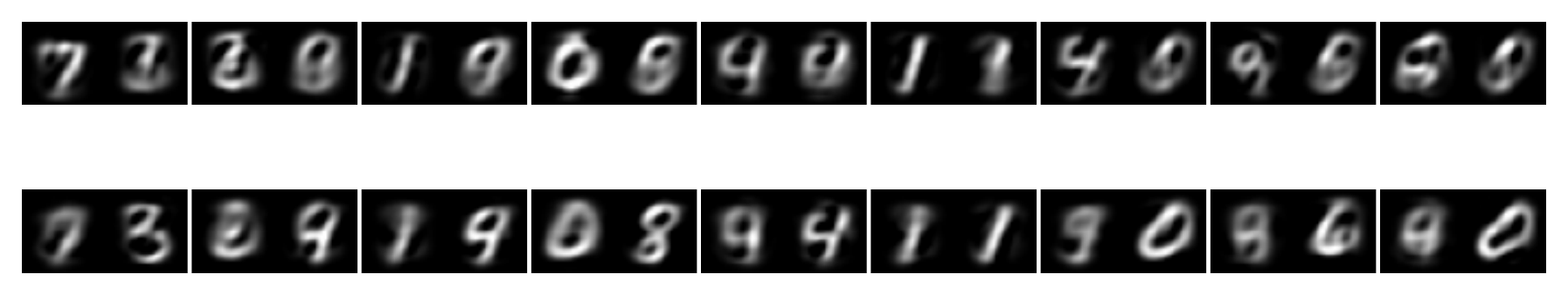}
}
\subfigure[Reconstructed images, task-aware no coding.]{
\includegraphics[scale=0.4]{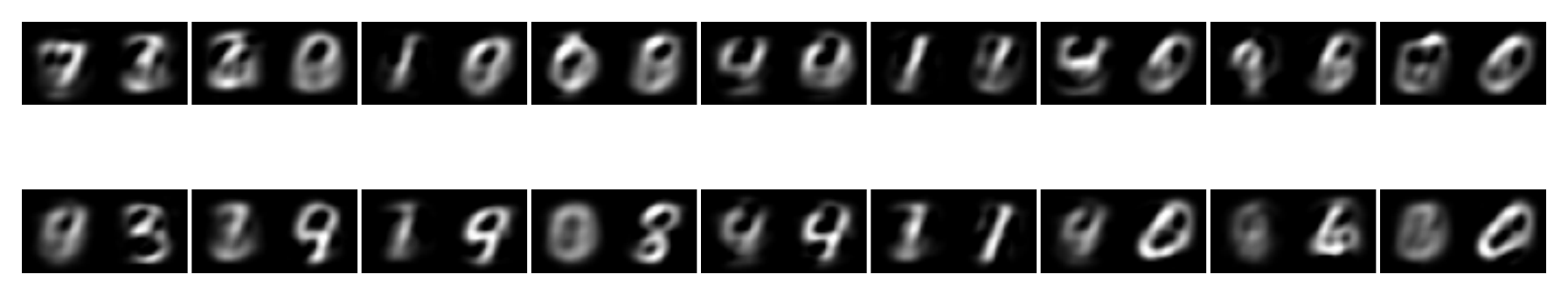}
}
\subfigure[Reconstructed images, task-agnostic coding.]{
\includegraphics[scale=0.4]{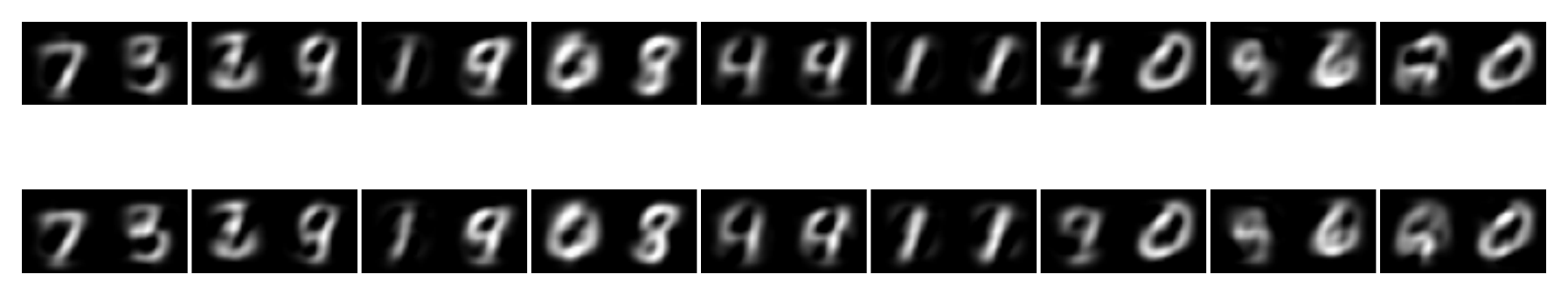}
}
\subfigure[Reconstructed images, coding benchmark.]{
\includegraphics[scale=0.4]{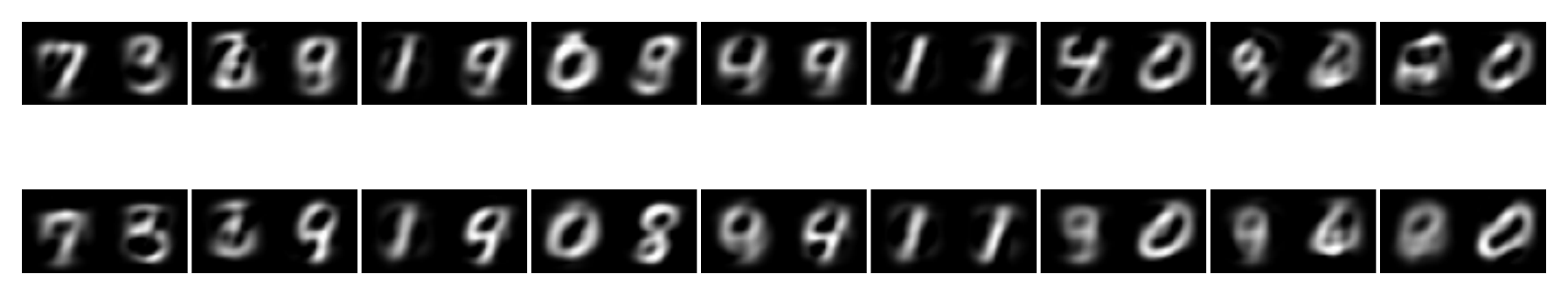}
}
\caption{Original images and reconstructed images for different approaches ($Z=10, \gamma=0.9$). The first and the second row of each approach are the reconstructed images at node 3 and 4, respectively.}
\label{fig:visualization}
\end{figure}

In Fig. \ref{fig:visualization}, we compare a few reconstructed images for different approaches in our evaluation (when $Z=10, \gamma=0.9$). For task-aware approaches, since the task loss of node 3 is dominated by the reconstruction of the left feature map respectively, we see the left part of the reconstructed image at node 3 has a higher quality compared to the right part. And for node 4 we have similar observations. On the other hand, for task-agnostic coding approach, the quality difference between the left part and the right part is not as obvious as the task-aware approaches.

\end{document}